%% file: main.tex
\documentclass[12pt]{amsart}

\input{preamble}

\title{\textbf{Compositional quantum heuristics for max-clique detection}}

\author{Tiffany Duneau}
\address[T.\ Duneau]{Quantinuum\\Oxford\\17 Beaumont St\\ OX1 2NA\\Oxford\\United Kingdom}
\email{\href{mailto:tiffany.duneau@quantinuum.com}{tiffany.duneau@quantinuum.com}}
\author{Colin Krawchuk}
\address[C.\ Krawchuk]{Quantinuum\\Oxford\\17 Beaumont St\\ OX1 2NA\\Oxford\\United Kingdom}
\email{\href{mailto:colin.krawchuk@quantinuum.com}{colin.krawchuk@quantinuum.com}}
\author{Anna Pearson}
\address[A.\ Pearson]{Quantinuum\\Oxford\\17 Beaumont St\\ OX1 2NA\\Oxford\\United Kingdom}
\email{\href{mailto:anna.pearson@quantinuum.com}{anna.pearson@quantinuum.com}}

\begin{document}

\maketitle

\input{Paper/abstract/abstract}

\input{Paper/intro/intro}

\input{Paper/background/background}
\input{Paper/theory/loss_invariance}
\input{Paper/theory/qgnns}

\input{Paper/experiments/tasks}

\input{Paper/theory/pine}

\input{Paper/experiments/maxcliquemodels}
\input{Paper/experiments/findkmodels}
\input{Paper/experiments/conclusion}

\clearpage

\bibliographystyle{alpha}
\bibliography{refs.bib}

\clearpage

\appendix
\input{Paper/appendix/A}

\input{Paper/appendix/B}

\input{Paper/appendix/Extra_plots}

\end{document}

%% file: preamble.tex
\usepackage[utf8]{inputenc}
\usepackage[T1]{fontenc}
\usepackage{lmodern}
\usepackage[canadian]{babel}
\usepackage[babel=true]{microtype}

\usepackage{amsmath, amssymb, amsfonts, amscd, mathtools, commath}
\usepackage{newtxmath}
\usepackage{amstext, array}
\usepackage{dsfont}
\usepackage{ytableau}

\numberwithin{equation}{section}

\usepackage{graphicx, float, standalone, pict2e, epic, epstopdf, subcaption}
\usepackage{tikz-cd}

\usetikzlibrary{
quantikz2,
arrows.meta,
decorations.pathmorphing,
backgrounds,
positioning,
fit,
trees,
shapes
}

\usepackage[pdflang=en-UK,
colorlinks,
urlcolor=midgreen,
linkcolor=midgreen,
citecolor=midgreen]{hyperref}

\usepackage{cleveref}

\usepackage{algorithm}
\usepackage{algpseudocode}

\usepackage{titlesec}

\newcommand{\sectionrule}{
\vspace{0.5em}
\noindent\tikz[baseline]{\draw[thick, color=gray!60] (0,0) -- (16,0);}
\par
}

\titleformat{\section}
{\normalfont\fontsize{16pt}{16pt}\bfseries\color{darkgreen}}
{}
{0pt}
{}[\vspace{-0.5em}\sectionrule]

\titleformat{\subsection}
{\normalfont\large\bfseries\color{darkpurple}}
{}
{0pt}
{}

\titleformat{\subsubsection}
{\normalfont\bfseries\color{darkpurple}}
{}
{0pt}
{}

\usepackage{fancyhdr}
\usepackage[a4paper,margin=1in,headheight=14pt]{geometry}

\fancyhfoffset[L]{0pt}
\fancyhfoffset[R]{0pt}

\usepackage{commands}


%% file: Paper/abstract/abstract.tex
\section{Abstract}

Quantum machine learning holds the promise of combining the success of classical machine learning methods with the power of quantum computing, however one of the largest obstacles facing the field is the problem of barren plateaus. Parameterised quantum circuits offer a flexible framework for developing quantum machine learning models, but their practicality is constrained by a trade-off between trainability and classical simulability. In general, circuits that are sufficiently expressive to model complex behaviour often exhibit barren plateaus, where gradients vanish and optimisation fails. 

In this work we investigate a compositional approach to mitigate this trade-off by assembling larger quantum models from smaller subcomponents. To ensure trainability of these subcomponents, we describe a framework for constructing group-invariant loss functions, which introduce symmetry-induced inductive bias and lead to improved gradient behaviour and generalisation. In particular, we use this framework to design permutation-equivariant quantum graph neural networks for identifying maximal cliques in graphs. The models we construct exhibit superior training gradients through symmetry-induced bias, and our experiments demonstrate that the trained models generalise to larger, more complex problem instances. 

Finally, inspired by Quantum-Informed Recursive Optimisation Algorithms \cite{QIROA}, we implement a recursive hybrid quantum-classical heuristic using the learned quantum models to guide a classical search procedure, demonstrating improved inference accuracy and scalability. Together, these results suggest that compositional circuits could be a viable pathway towards scalable quantum learning models that remain challenging to reproduce classically.

%% file: Paper/intro/intro.tex
\section{Introduction}

\begin{figure}[h]
    \centering
    \begin{subfigure}[b]{0.2\linewidth}
        \centering
        \includegraphics[width=\linewidth]{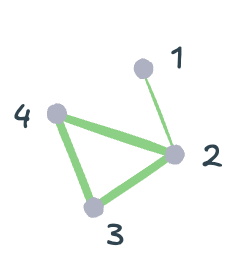}
        \caption{}
        \label{fig:experiment_schema/g}
    \end{subfigure}
    \hspace{3em}
    \begin{subfigure}[b]{0.2\linewidth}
        \centering
        \includegraphics[width=\linewidth]{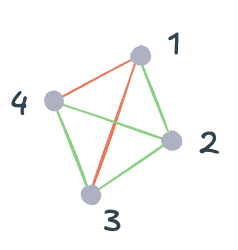}
        \caption{}
        \label{fig:experiment_schema/g_c}
    \end{subfigure}
    \hfill
    \begin{subfigure}[b]{0.8\linewidth}
        \centering
        \includegraphics[width=\linewidth]{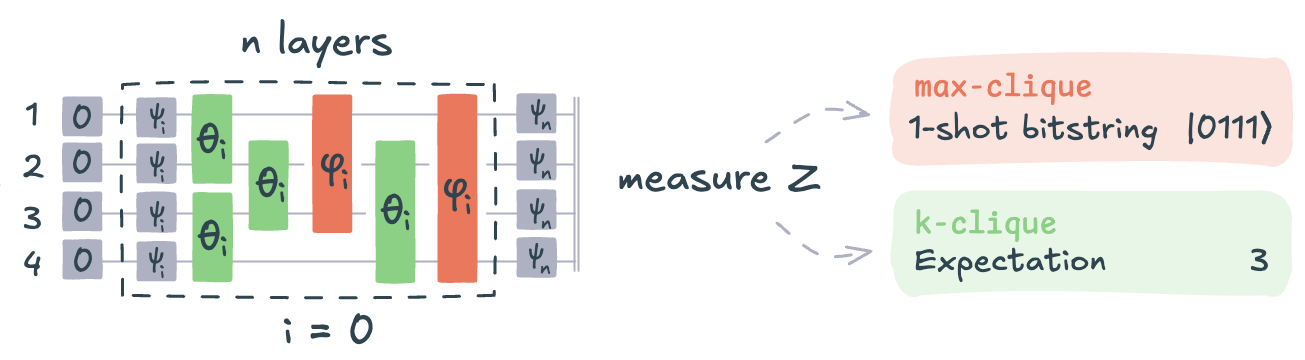}
        \caption{}
        \label{fig:experiment_schema/ansatz}
    \end{subfigure}
    \caption{The experiment setup at a glance: (A) A graph, with the maximum sized 3-clique highlighted. (B) Adding complement edges (orange). (C) The graph encoded as a quantum circuit. Qubits are labelled by the nodes. For each edge, we add a gate parametrised by its colour. We interpret the output as either a bitrstring identifying a maximum clique, or the size of the max clique.}
    \label{fig:experiment_schema}
\end{figure}

Parameterised quantum circuits (PQCs) are a promising candidate for near-term quantum applications due to their resilience to noise and suitability for diverse tasks, such as variational quantum eigensolvers \cite{peruzzo2014variational}, variational quantum time evolution \cite{yuan2019theory}, quantum approximate optimization algorithm \cite{QAOA}, among others. Additionally, their ability to approximate complex mappings between data and measurement outcomes allows PQCs to act as natural analogues of classical neural networks.

In the context of machine learning, demonstrating an advantage with PQCs commonly requires identifying a task where the circuit can be efficiently trained and evaluated on a quantum device without being classically simulable. This can result in a computational advantage if the labelling function is intractable for classical computers, or a learning theoretic advantage if a classical computer is unable to efficiently identify the labelling function \cite{dunjko}. In either case, the goal is to isolate a setting where quantum circuits can be trained effectively while maintaining a separation in computational or statistical efficiency compared to classical algorithms.

The main barrier to using PQCs is that they are known to exhibit barren plateaus, where the magnitude of loss-function gradients decreases exponentially with the size of the system \cite{McClean2018BP}. As a result, gradient-based optimisation becomes infeasible for large circuits. Various strategies have been proposed to construct PQCs without barren plateaus by introducing design constraints such as locality, symmetry, or restricted gate sets such as matchgates \cite{holmes2022connecting,larocca2023theory,marrero2021entanglement}. However, while these modifications can improve trainability, in many cases the addition of this extra structure has enabled classical simulation \cite{jozsa2008matchgates, takahashi2020classically, goh2025lie}.

One proposal for mitigating this tradeoff is to construct models that can be trained classically, and only require a quantum computer at test time. 
For example, by building models from instantaneous quantum polynomial (IQP) circuits, expectations can be computed in polynomial time and used to train the models, while sampling to extract solutions is believed to be hard \cite{recio-armengol_train_2025, bremner_classical_2011}. However, in many cases, practical implementations on NISQ devices may become simulable after all due to compilation overheads \cite{placidi_impact_2026}.

Here, we consider a second approach via \textit{compositional} models, where the PQC is first trained on smaller subcircuits. These local circuits are then composed to provide meaningful inferences at large scales \cite{duneau_scalable_2024}. For this strategy to be effective, the constituent subcircuits must remain trainable and therefore we develop a framework for constructing group-invariant loss functions and show experimental evidence of improved training dynamics.

The obstacle to the approach outlined above is finding tasks where evaluation can be understood through smaller, barren plateau-free contexts, where the problem has the requisite symmetry for trainability, and where the desired function is classically challenging to compute. In this work we design quantum graph neural network models for detecting maximally sized cliques in simple graphs (as per the setup in \autoref{fig:experiment_schema}). This setting allows us to take advantage of the recursive nature of clique detection, as well as existing theoretical results on the trainability of permutation equivariant quantum circuits.

Furthermore, we demonstrate that our models are scalable - allowing meaningful predictions to be made on problem instances that are more challenging than those on which the model was trained. In particular, we show that symmetrised models performs better on the generalisation task than the unsymmetrised model, thus demonstrating compositional generalisation with low training overhead.

%% file: Paper/background/background.tex
\section{Setup}

Let $\mathcal{X}$ be a data domain, $\Theta$ a parameter space, and $\mathcal{H}$ a finite-dimensional Hilbert space.

\begin{definition}\label{def:pqcs}
    We consider layered parameterised quantum circuits of the form
    \[
    \pqc = \Pi_{\ell = 1}^{L}\Pi_{k=1}^K e^{ \theta_{\ell,k}H_k}
    \]
    where $-i\textbf{H}_k$ is the gate hamiltonian and $\theta_{\ell,k}$ are tunable parameters.
\end{definition}

In this work we focus on QML models formed from PQCs.

\begin{definition} A \emph{quantum neural network} model $\mathcal{M}$ is specified by
\[
\mathcal{M}=\left(\rho(X), \bf{U}(X, \boldsymbol{\theta}),  \obs(X), \Phi\right)\]
 
where for each entry $X \in \mathcal{X}$:
\begin{itemize}
  \item $\rho(X) \in i\mathfrak{u}(\mathcal{H})$ is a density operator (positive semi-definite and trace one),
  \item $\mathbf{U}(X,\theta) \in \mathbf{U}(\mathcal{H})$ is a unitary circuit, possibly parameterised by $\theta \in \Theta$,
  \item $\obs(X) \in i\mathfrak{u}(\mathcal{H})$ is an observable,
  \item $\Phi$ is a fixed classical post-processing functional.
\end{itemize}

The model output is the function
\[
\Phi(X, \boldsymbol{\boldsymbol{\theta}} ):=\Phi\left(\rho(X), \bf{U}(X, \boldsymbol{\theta}),  \obs(X)\right)  \in \mathcal{Y}
\]
where the output space $\mathcal{Y}$ is determined by the choice of $\Phi$. 
\end{definition}

Note that each component of the model may depend on $X$. Whenever a component is independent of the data entry (say $\rho(X) = \rho, \text{ for all } X \in \mathcal{X})$ we suppress $X$ in the notation. A typical loss function considers the expectation value of the observable $\obs$ with respect to the parametrised circuit:
\[\mathcal{L}(X,\boldsymbol{\theta})
=
\trace{
\obs(X)\,\mathbf{U}(X,\boldsymbol{\theta})\,
\rho(X)\,
\mathbf{U}(X,\boldsymbol{\theta})^{\dagger}
}.\]
Typically, the loss function $\mathcal{L}$ is distinct from the model output $\Phi$. Indeed we explicitly do not restrict the loss to operate only over the label space $\mathcal{Y}$ in order to account for QAOA-like setups in which $\mathcal{L}$ considers an expectation value while $\Phi$ represents sampling from the quantum state defined by the model components.

We assess the trainability of the parameterised models compared in this paper by comparing the magnitude of their loss-function gradients. Denote by $\partial_{\mu}\mathcal{L}$ the partial derivative with respect
to a circuit parameter $\mu\in\boldsymbol{\theta}$.

\begin{definition}[Barren plateau]
A parameterised quantum model is said to exhibit a \emph{barren plateau} (with respect to a specified distribution over parameters, typically random initialisation) if
\[\mathbb{E}\!\left[\partial_{\mu}\mathcal{L}\right] = 0,\]
and the variance
\[\mathrm{Var}\!\left(\partial_{\mu}\mathcal{L}\right)\]
decays exponentially with the system size (number of qubits). In this regime, estimating gradients to constant relative precision requires an exponential number of circuit evaluations.
\end{definition}

This definition follows the standard operational notion introduced by McClean et al. \cite{MBS+18}, and captures the practical obstruction to
gradient-based optimisation on near-term quantum hardware. Since gradients are estimated from finite measurement statistics, small variance directly translates into prohibitive shot complexity.

Throughout this work, \textit{trainability} refers specifically to the absence of exponential gradient suppression arising from expressibility-induced concentration, rather than from shot noise alone or optimisation pathologies such as local minima.

Previous works have shown that barren plateaus are closely tied to the size of the \emph{dynamical lie algebra} (DLA) associated with a parameterised quantum circuit \cite{Fontana2023, holmes2022connecting} and that symmetric ansatz effectively constrain the DLA dimension \cite{Schatzki2022}. Motivated by this, we outline a general framework for constructing PQC models with symmetric loss functions.

%% file: Paper/theory/loss_invariance.tex
\section{A Framework for Group Invariant Loss Functions} 

To make the compositional PQC approach viable, we require subcircuits that admit stable and informative training dynamics. We therefore seek structural constraints that improve optimisation while preserving expressivity.

This section formalises how symmetry constraints can be used to design QNN models with loss functions that are invariant with respect to a group action. This condition has the effect of restricting the dynamical lie algebra of the model. In \autoref{sec:conclusion} we provide empirical evidence that symmetry invariant loss functions exhibit better training dynamics while enabling compositional scaling. The relevant background on group representations can be found in \autoref{app:background}.

\subsection{Semi-Symmetric Circuits}\label{sec:semi_sym_circs}

Suppose we have a unitary representation of a group $\phi: G \to \mathbf{U}(\mathcal{H})$. Composing with the adjoint representation of $ \mathbf{U}(\mathcal{H})$ defines a $G$-action on $i\mathfrak{u}(\mathcal{H})$:
\[
\mathrm{Ad} \circ \phi(g):  \ H \mapsto \phi(g) H \phi(g)^\dagger \quad \text{for all } g \in G.
\]
Now assume we have a group action of $G$ on the data domain
\[
G \times \mathcal{X} \to \mathcal{X}; \  (g, X) \mapsto g \cdot X .\]

Viewing the density operator as a map
\[
\rho(\placeholder ): \mathcal{X}  \to i\mathfrak{u}(\mathcal{H}); \ X \mapsto \rho(X)
\]
we say that $\rho$ is {$G$-equivariant} if 
\[
\rho(g \cdot X)=\phi(g) \rho(X) \phi(g)^{\dagger}
\]
for all  $g \in G$. This is captured by the commutative diagram below:
\[
\begin{tikzcd}[
    row sep=2.5em,
    column sep=3.5em,
    nodes={inner sep=1.0ex},  
    arrows={shorten <= 0.3em, shorten >= 0.3em}
]
\mathcal{X} \arrow[r,"\rho(\placeholder )"] \arrow[d,"g\cdot "'] & i\mathfrak{u}(\mathcal{H}) \arrow[d,"\mathrm{Ad} \circ \phi(g)"] \\
\mathcal{X} \arrow[r,"\rho(\placeholder )"'] & i\mathfrak{u}(\mathcal{H})
\end{tikzcd} 
\]
Similarly, we call an observable $G$-equivariant when
\[
\obs(g \cdot X)=\phi(g)\obs(X) 
\phi(g)^{\dagger}
\]
for all $g \in G$. In the case of circuits, we consider equivariance with respect to the conjugation action of $\phi(g)$ on $\mathbf{U}(\mathcal{H})$:
\[
U(g \cdot X, \theta)=\phi(g) U(X, \theta) \phi(g)^{\dagger}\]

Analogously, we say that $\rho, \pqc, \  \obs$ are $G$-invariant if they are invariant under both the data and conjugation actions: 
\begin{align*}
\rho(g \cdot X) = \rho(X) \quad &\text{and} \quad \phi(g)\rho(X)\phi(g)^\dagger = \rho(X), \\
U(g \cdot X, \theta) = U(X, \theta) \quad &\text{and} \quad  \phi(g)U(g \cdot X, \theta)\phi(g)^\dagger = U(X, \theta) \\
\obs(g \cdot X) = \obs(X) \quad &\text{and} \quad \phi(g)\obs( X) \phi(g)^\dagger= \obs(X)
\end{align*}
respectively. 


\begin{definition} A QNN is \textit{semi-symmetric} if some subset of $\{\mathbf{U}, \obs, \rho\}$ is $G$-equivariant, and all other components are $G$-invariant.
\end{definition}

The following result provides a general framework for constructing $G$-equivariant QNN models. 

\begin{prop}\label{prop:semi_sym_invariance}
   Suppose $\mathcal{M}$ is a semi-symmetric QNN model with expectation loss,
   \[
\mathcal{L}(X,\boldsymbol{\theta})=\trace{\obs(X) \bf{U}(X, \theta) \rho(X) \bf{U}(X, \theta)^{\dagger}}
\]
then the model $\mathcal{M}$ is $G$-invariant.
\end{prop}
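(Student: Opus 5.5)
The claim to prove is $\mathcal{L}(g\cdot X,\boldsymbol{\theta})=\mathcal{L}(X,\boldsymbol{\theta})$ for all $g\in G$, $X\in\mathcal{X}$ and $\boldsymbol{\theta}$; this is what we read as \emph{$G$-invariance of the model}. The plan is a direct computation. We rewrite each of the three components at the transformed point $g\cdot X$ as a conjugate of its value at $X$ by $\phi(g)$. Then cyclicity of the trace cancels all the conjugations at once.

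First we observe that both $G$-equivariance and $G$-invariance of a component $C\in\{\rho,\mathbf{U},\obs\}$ imply the same identity
\[
C(g\cdot X)=\phi(g)\,C(X)\,\phi(g)^{\dagger}.
\]
For equivariant components this is the definition. For invariant components it holds because $C(g\cdot X)=C(X)$ and $C(X)=\phi(g)C(X)\phi(g)^\dagger$.

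The circuit needs a little care, since the invariance condition is written as $\phi(g)U(g\cdot X,\theta)\phi(g)^\dagger=U(X,\theta)$. We combine this with $U(g\cdot X,\theta)=U(X,\theta)$ to get $\phi(g)U(X,\theta)\phi(g)^\dagger=U(X,\theta)$, which gives the displayed identity. Because $\phi(g)$ is unitary, the same identity also holds for $\mathbf{U}(X,\theta)^\dagger$. By definition of semi-symmetric, every component is either equivariant or invariant, so the identity holds for all three simultaneously, whichever subset is the equivariant one.

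We then substitute into the loss:
\[
\mathcal{L}(g\cdot X,\boldsymbol{\theta})=\mathrm{Tr}\big[\phi(g)\obs(X)\phi(g)^\dagger\,\phi(g)\mathbf{U}(X,\boldsymbol{\theta})\phi(g)^\dagger\,\phi(g)\rho(X)\phi(g)^\dagger\,\phi(g)\mathbf{U}(X,\boldsymbol{\theta})^\dagger\phi(g)^\dagger\big].
\]
Using $\phi(g)^\dagger\phi(g)=I$, the inner factors collapse. Cyclicity of the trace then removes the outer $\phi(g)$ and $\phi(g)^\dagger$, leaving $\mathcal{L}(X,\boldsymbol{\theta})$.

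The only real obstacle is the first step: checking that the paper's slightly asymmetric wording of invariance for $\mathbf{U}$ still yields the conjugation identity, and making sure the argument does not depend on which components are equivariant. Everything after that is routine trace algebra.
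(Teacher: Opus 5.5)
Your argument is correct, but it is organised differently from the paper's. The paper proceeds by case analysis on which subset of $\{\mathbf{U},\obs,\rho\}$ is equivariant. It writes out four cases explicitly. In each, the invariant components absorb the group action via $C(g\cdot X)=C(X)$ and $\phi(g)^\dagger C(X)\phi(g)=C(X)$, and cyclicity moves $\phi(g)$ past the remaining factors. The other cases are then dispatched by the $\obs\leftrightarrow\rho$ symmetry of the trace, with the remark that the all-equivariant case is immediate because the whole product is conjugated by $\phi(g)$.

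You instead first show that a $G$-invariant component automatically satisfies the equivariance identity $C(g\cdot X)=\phi(g)C(X)\phi(g)^\dagger$. This reduces every semi-symmetric configuration to that single all-equivariant case, so one trace computation suffices. Your route is shorter and handles all subsets uniformly without appealing to symmetry between cases. It also makes visible that the semi-symmetric hypothesis is equivalent to requiring all three components to be equivariant, with invariant components a special case. The paper's explicit cases only give a more literal illustration of how invariant components absorb the action, matching its trace figure, at the cost of more bookkeeping.

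One small point: the identity for $\mathbf{U}(X,\boldsymbol{\theta})^\dagger$ follows just by taking adjoints of $\phi(g)\mathbf{U}(X,\boldsymbol{\theta})\phi(g)^\dagger$. Unitarity of $\phi(g)$ is what you actually need for $\phi(g)^\dagger\phi(g)=I$ in the collapse step.
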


\begin{figure}[h]
    \centering
    \includegraphics[width=0.3\linewidth]{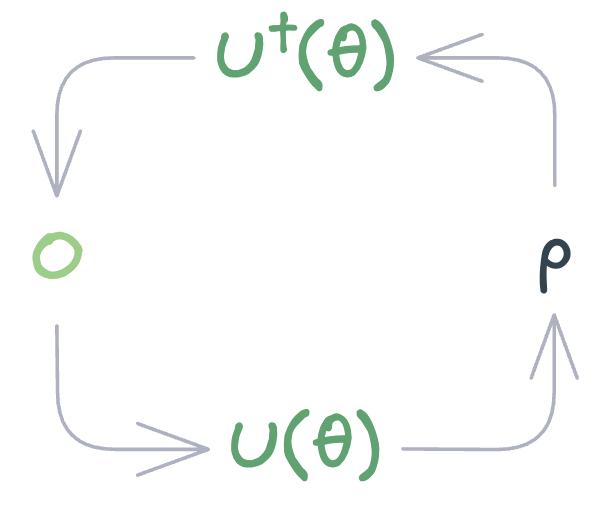}
    \hspace{3em}
    \includegraphics[width=0.45\linewidth]{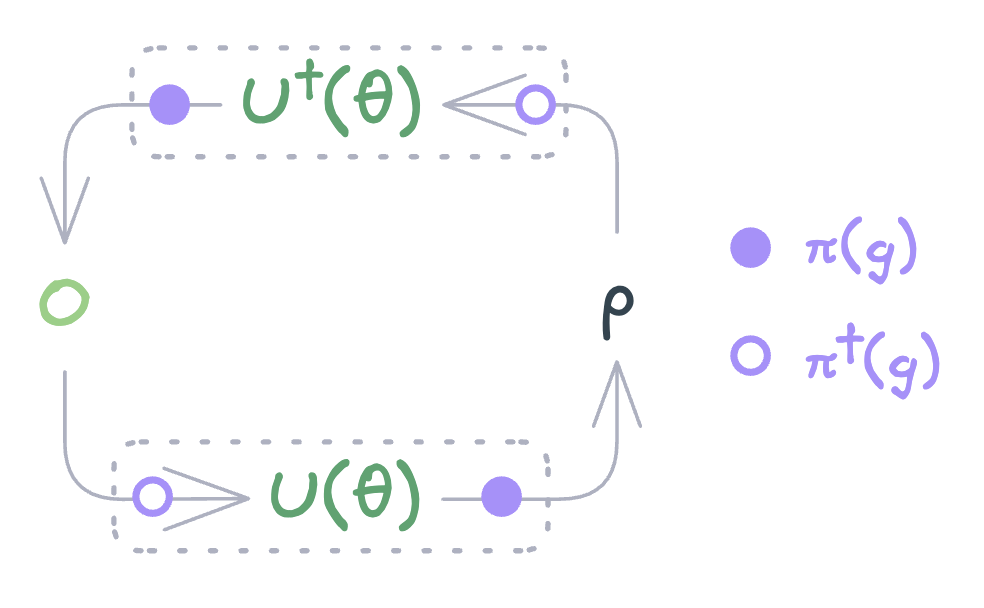}
    \caption{The cyclicity of the trace. The group actions are either absorbed or pass through the equivariant components of the circuit, resulting in an invariant loss.}
    \label{fig:trace}
\end{figure}

In order to derive trainability guarantees for PQCs (in the spirit of \cite[Theorem 2.8]{Fontana2023}) we would require fully symmetric QNN models where each component is $G$-invariant. In practice, this constraint is unrealistic as it leads to model constructions without awareness of individual data entries. However, as demonstrated by our results in Section \autoref{subsec:training_dynamics}, the weaker Semi-Symmetric constraint provides improved training dynamics while preserving meaningful dependence on the input problem. 

%% file: Paper/theory/qgnns.tex
\section{Quantum Graph Neural Networks}\label{sec:qgnns}

The previous section introduced a framework for constructing invariant loss functions with respect to arbitrary group actions. As an illustration, we now specialise to the case of the permutation action of the symmetric group $S_n$ on a quantum circuit. This section describes a permutation Semi-Symmetric model that will be used in \autoref{sec:tasks} to create quantum heuristics for graph optimisation problems.

Let $\Gamma = (\Gamma_v,\Gamma_e)$ be a simple graph on $n$-vertices with vertex set $\Gamma_v$ and edge set $\Gamma_e$. The ambient data domain of labelled graphs on $n$ vertices carries a natural action of the symmetric group $S_n$, given by relabelling vertices. Explicitly, for a permutation $\sigma \in S_n$, the relabelled graph $\sigma \cdot \Gamma$ is obtained by mapping each vertex $i \in \Gamma_v$ to $\sigma(i)$ and correspondingly permuting edges. For a fixed graph $\Gamma$, the subgroup of graph automorphisms
\[
\mathrm{Aut}(\Gamma) \;\leq\; S_n
\]
consists of permutations that preserve adjacency, which is the maximal symmetry group intrinsic to that specific graph.

\subsection{Graph-Based Circuit Ansatz}

We now introduce the graph-based variational circuit structures used in the later sections. Let $\mathcal{H} = (\mathbb{C}^2)^{\otimes n}$, with one qubit associated to each vertex of $\Gamma$. In this work we only use a single qubit per node, but in general these encodings can be applied to multiple qubits (where  $\mathcal{H} = (\mathbb{C}^d)^{\otimes n}$).

\begin{definition}[Graph-based ansatz]
The ansatze we consider here can all be framed in terms of the following structure:

\begin{equation}\label{eq:graph_ansatz}
    \mathbf{U}(\boldsymbol{\theta}) =
    \prod_{\ell=1}^{L}
    \left[
        \exp\!\left(\sum_{e\in \Gamma_e} H_{e}(\vec{\beta_\ell}) \right)\
        \exp\!\left(\sum_{\bar{e} \notin \Gamma_e} H_{\bar{e}}(\vec{\gamma_\ell}) \right)
        \exp\!\left(\sum_{v \in \Gamma_v} H_{v}(\vec{\alpha_\ell}) \right)\
    \right]
\end{equation}

Where we have expanded $\pqc$ as repeated layers of a graph embedding. The three components of each layer are understood to embed node information via the node hamiltonian $H_v$, and edges via the edge and anti-edge hamiltonians $H_e$ and $H_{\bar{e}}$. Each of these hamiltonians are parametrised independently at each layer.    
\end{definition}

Below, we consider a selection of edge hamiltonians that give rise to different equivariance and invariance properties of the resulting circuit $\pqc$. Note that we adopt the convention of writing $X_v, Y_v, Z_v$ to denote the Pauli operators acting on qubit $v$, and expand vector components in the superscript: $\vec{\beta}_{\ell} := [\beta^{(0)}_{\ell}, ..., \beta^{(n)}_{\ell}]$.

\begin{definition}[\Rook]\label{def:rook}
    Symmetric (K)complete graph Ansatz.
    We consider layered parametrised quantum circuits of the form in \cref{eq:graph_ansatz}, where:

\begin{align*}
    H_{v}(\vec{\alpha_{\ell}}) =  \alpha^{(0)}_{\ell}\, X_v + \alpha^{(1)}_{\ell}\, Z_v + \alpha^{(2)}_{\ell}\, X_v
    &&
    H_{e}(\vec{\beta}_{\ell}) = \beta^{(0)}_{\ell}\, Z_i Z_j + \beta^{(1)}_{\ell} (Z_i + Z_j)
    \\
    &&
    H_{\bar{e}}(\vec{\gamma}_{\ell}) = \gamma^{(0)}_{\ell}\, Z_i Z_j + \gamma^{(1)}_{\ell} (Z_i + Z_j)
\end{align*}
    The node hamiltonian $H_{v}$ acts identically on each qubit with layer-specific parameters $\alpha^{(X)}_{\ell}$, $\alpha^{(Y)}_{\ell}$, and $\alpha^{(Z)}_{\ell}$. 
    The \emph{edge layer} consists of $ZZ$ phase gadgets $\exp(\beta^{(0)}_{\ell} Z_i Z_j)$ applied to each edge $e = (i,j)\in\Gamma_e$, alongside single-body components $\exp(\beta^{(1)}_{\ell} Z_i)$, $\exp(\beta^{(1)}_{\ell} Z_j)$ for each node involved in the edge.
    Similarly, the \emph{anti-edge layer} consists of $ZZ$ phase gadgets
    $\exp(\gamma^{(0)}_{\ell} Z_i Z_j)$ and $\exp(\gamma^{(1)}_{\ell} Z_i)$, $\exp(\gamma^{(1)}_{\ell} Z_j)$ applied to all non-edges $\bar{e} =(i,j)\notin\Gamma_e$.
    The complete set of trainable parameters shared across vertices, edges, and non-edges are 
    \[
    \boldsymbol{\theta}
    =
    \{
    \vec{\alpha}_\ell,
    \vec{\beta}_{\ell},
    \vec{\gamma}_{\ell},
    \}_{\ell=1}^L
    \] 
    respectively.
\end{definition}

This structure matches Definition \ref{def:pqcs}, with gate hamiltonians determined by the graph topology. Importantly, parameters are {tied across orbits} of the vertex and edge sets, rather than assigned independently.

The expressivity of this ansatz depends on the particular graph that is being encoded. We add the anti-edge layer in order to increase the base expressivity, such that the least expressive graph ansatz remains capable of encoding optimal solutions to typical graph problems. In particular, this intends to mitigate known limitations of QAOA-like graph ansatze due to locality \cite{farhi_quantum_2020}. By encoding the graph within a complete graph, each node can be correlated to any other node in the graph independently of the choice of layers, at a polynomial cost to circuit depth.

\begin{remark}
    Consider a graph with empty edge set. Without anti-edges, the resulting circuit would decompose into a product state over $n$ qubits:
    \[
    \mathbf{U}(\boldsymbol{\theta})
    = 
    \prod_{\ell = 1}^L \left[
        \exp\!\left(\sum_{v \in \Gamma_v} H_{v}(\vec{\alpha_\ell}) \right)\
    \right]
    = 
    \prod_{\ell = 1}^L \left[
    \exp\!\left(H_{v}(\vec{\alpha_\ell}) \right)^{|\Gamma_v|}
    \right]
    \]
    More generally, the ansatz without anti-edges decomposes as a product state over the disconnected components of the graph.
    For a task whose optimal solution does not directly decompose into a product of the optimal solutions over the the disconnected subgraphs, this ansatz cannot capture the optimal solution. The max-clique problem we consider in \autoref{sec:tasks} is an example of such a task, while graph colouring or maximum cut would admit the simpler ansatz.
\end{remark}

\begin{prop}[Graph-automorphism equivariance]\label{prop:rook_is_symmetric} For each data point $\Gamma$, the \Rook
ansatz is $S_n$-equivariant.
\end{prop}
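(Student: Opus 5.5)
The plan is to verify directly the defining identity $\mathbf{U}(\sigma\cdot\Gamma,\boldsymbol{\theta}) = \phi(\sigma)\,\mathbf{U}(\Gamma,\boldsymbol{\theta})\,\phi(\sigma)^\dagger$ for every $\sigma\in S_n$. Here $\phi:S_n\to\mathbf{U}((\mathbb{C}^2)^{\otimes n})$ is the qubit-permutation representation, $\phi(\sigma)\,|b_1\cdots b_n\rangle = |b_{\sigma^{-1}(1)}\cdots b_{\sigma^{-1}(n)}\rangle$. Since conjugation by a unitary is a group automorphism of $\mathbf{U}(\mathcal{H})$ and commutes with the exponential map ($\phi e^{A}\phi^\dagger = e^{\phi A\phi^\dagger}$), it suffices to check equivariance at the level of the three generators in each layer of \cref{eq:graph_ansatz}. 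The full product over $\ell$ then inherits the identity factor by factor, because all parameters $\vec\alpha_\ell,\vec\beta_\ell,\vec\gamma_\ell$ are shared across vertices and (anti-)edges and are therefore unaffected by relabelling.

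\emph{Step 1 (local operators).} First I record the basic fact $\phi(\sigma)\,P_v\,\phi(\sigma)^\dagger = P_{\sigma(v)}$ for any single-qubit Pauli $P\in\{X,Y,Z\}$. Consequently $\phi(\sigma)\,Z_iZ_j\,\phi(\sigma)^\dagger = Z_{\sigma(i)}Z_{\sigma(j)}$ and $\phi(\sigma)(Z_i+Z_j)\phi(\sigma)^\dagger = Z_{\sigma(i)}+Z_{\sigma(j)}$. In particular $\phi(\sigma)H_{(i,j)}(\vec\beta_\ell)\phi(\sigma)^\dagger = H_{(\sigma(i),\sigma(j))}(\vec\beta_\ell)$, and the same holds for $H_v$ and $H_{\bar e}$. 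Each local Hamiltonian is symmetric in $i,j$, so the orientation of an edge does not matter.

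\emph{Step 2 (sums over orbits).} The node sum $\sum_{v\in\Gamma_v}H_v$ runs over all $n$ vertices, and $\sigma$ permutes them bijectively. Conjugation therefore leaves it unchanged, which is consistent with the fact that it is the same for $\Gamma$ and $\sigma\cdot\Gamma$. For edges, $\sigma$ maps $\Gamma_e$ bijectively onto $(\sigma\cdot\Gamma)_e = \{(\sigma(i),\sigma(j)) : (i,j)\in\Gamma_e\}$, so
\[
\phi(\sigma)\Big(\sum_{e\in\Gamma_e}H_e(\vec\beta_\ell)\Big)\phi(\sigma)^\dagger=\sum_{e\in(\sigma\cdot\Gamma)_e}H_e(\vec\beta_\ell).
\]
Because $\sigma$ is a bijection on unordered pairs, it also maps the complement of $\Gamma_e$ onto the complement of $(\sigma\cdot\Gamma)_e$, which gives the analogous identity for the anti-edge sum.

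\emph{Step 3 (assembly).} Exponentiate each identity, conjugate the ordered product layer by layer (inserting $\phi(\sigma)^\dagger\phi(\sigma)=I$ between factors), and conclude the equivariance statement.

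The main obstacle is purely bookkeeping rather than conceptual: fixing consistent conventions for $\phi$ versus the relabelling action so that one gets $P_{\sigma(v)}$ rather than $P_{\sigma^{-1}(v)}$. I would fix the action $\sigma\cdot\Gamma$ first and then choose $\phi(\sigma)$ to match it. A related point is that the statement's phrase "for each data point $\Gamma$" should be read as equivariance of the map $\Gamma\mapsto\mathbf{U}(\Gamma,\boldsymbol\theta)$. As a corollary, restricting to $\sigma\in\mathrm{Aut}(\Gamma)$ gives $\sigma\cdot\Gamma=\Gamma$, so $\mathbf{U}(\Gamma,\boldsymbol\theta)$ commutes with $\phi(\sigma)$, which accounts for the proposition's title.
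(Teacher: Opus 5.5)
Your proposal is correct and follows essentially the same route as the paper's proof: conjugation by $\phi(\sigma)$ sends $H_v$ and $H_{ij}$ to $H_{\sigma(v)}$ and $H_{\sigma(i)\sigma(j)}$, and the relabelling maps the vertex, edge and non-edge sets of $\Gamma$ bijectively onto those of $\sigma\cdot\Gamma$, so the identity passes through the exponentials and the layered product. Your version is more explicit than the paper's, since it fixes the convention for $\phi$, notes that conjugation commutes with $\exp$, and checks that complements of edge sets are preserved; the $\mathrm{Aut}(\Gamma)$ corollary you draw is also correct.
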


Using results \autoref{prop:rook_is_symmetric} and \autoref{prop:semi_sym_invariance} we can build an $S_n$-invariant model by combining the \Rook ansatz with an $S_n$-invariant initial state and observable. Next, we introduce a more general graph ansatz that will be used as an unsymmetrised comparison to \Rook-based models. 

\begin{definition}[\MF]
An alternative `asymmetric' layered parametrised complete graph-based ansatz taking the form of \autoref{eq:graph_ansatz} is given by:

\begin{align*}
    H_{v}(\vec{\alpha}_\ell) = \alpha^{(0)}_{\ell}\, X_v + \alpha^{(1)}_{\ell}\, Z_v + \alpha^{(2)}_{\ell}\, X_v
    &&
    H_{e}(\vec{\beta}_\ell) = H^{(M)}_{ij}(\vec{\beta}_{\ell})
    \\&&
    H_{\bar{e}}(\vec{\gamma}_\ell) = H^{(M)}_{ij}(\vec{\gamma}_{\ell})
\end{align*}

The node hamiltonian $H_{v}$ acts identically on each qubit with layer-specific parameters $\alpha^{(X)}_{\ell}$, $\alpha^{(Y)}_{\ell}$, and $\alpha^{(Z)}_{\ell}$. The \emph{edge layer} is a generic two-qubit unitary acting on qubits $e = (i,j)\in\Gamma_e$, approximated by $M$ layers of the SIM4 ansatz  (Circuit 4 of \cite{sim_expressibility_2019}), with shared parameter vector $\vec{\beta}_{\ell}$. The \emph{anti-edge layer} is an independent, generic two-qubit unitary acting on qubits $\bar{e} = (i,j)\notin\Gamma_e$, also approximated by $M$ layers of the SIM4 ansatz, with shared parameter vector $\vec{\gamma}_{\ell}$.

The trainable parameters shared across vertices, edges, and anti-edges are
\[
\boldsymbol{\theta}
=
\{
\vec{\alpha}_\ell, \vec{\beta}_{\ell}, \vec{\gamma}_{\ell}
\}_{\ell=1}^L
\]
respectively.    

\end{definition}

\begin{figure}
    \centering
    \begin{subfigure}{0.4\linewidth}
        \centering
        \includegraphics[scale=0.75]{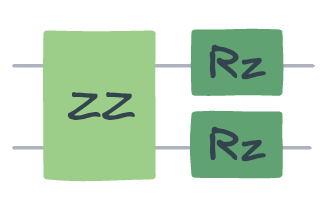}
        \caption{}
        \label{fig:rook}
    \end{subfigure}
    \hspace{2em}
    \begin{subfigure}{0.4\linewidth}
        \centering
        \includegraphics[scale=0.75]{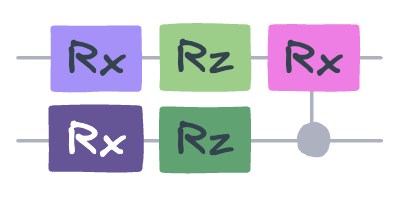}
        \caption{}
        \label{fig:sim4}
    \end{subfigure}
    \caption{A representation of the edge ansatze $H_e$, $H_{\bar{e}}$ used in this work. Gates with shared parameters are depicted in the same colour.
    (A) Equivariant ansatz. Each layer has 2 trainable parameters.
    (B) The SIM4 circuit ansatz of \cite{sim_expressibility_2019}, applied to two qubits. Each layer has 5 trainable parameters.}
\end{figure}

\begin{remark}\label{rem:MF_not_equivariant}
    The only guaranteed symmetry group of the \MF ansatz is the trivial group. This is because generic $2$-qubit operators need not commute, and so any reordering of edges may result in a distinct circuit. 
\end{remark}

The \MF is intentionally included as a control in order to illustrate a lack of symmetry in comparison to \Rook.

\subsection{Initial States}

When using input dependent circuits like \Rook or \MF, the initial state is taken to be a product state such as $\ket{0}^{\otimes n}$ or $\ket{+}^{\otimes n}$, as this provides an $S_n$-invariant initial state.
Here we consider trainable product states $\ket{\psi}^{\otimes n}$ that are implemented via an Euler decomposition $\ket{\psi} = \exp(\alpha_0^{(0)}X + \alpha_0^{(1)}Z + \alpha_0^{(2)}X)$.

While not considered in this work, one can define circuits that are $S_n$-invariant, although completely independent of the data entry (see \cite{Schatzki2022} for example). In this case, graph states are used to encode the problem information. These graph states are $S_n$-equivariant, and so such models are also Semi-Symmetric. 

\subsection{Observables}
The elements we have considered so far, $\rho$ and $\pqc$, act as an embedding of the graph. The observable we choose specifies how we extract information to solve the task. The choice of observable also influences the possible loss functions. Our main constraint is to maintain an $S_n$-invariant observable that can be used in conjunction with equivariant circuits. 

\begin{definition}[\Bitstring]\label{def:bitstring}
    The \Bitstring encoding simply interprets each computational basis state as an indicator function on the corresponding vertices. Each bitstring thus identifies a specific subset of $\Gamma_v$, which we interpret as a candidate solution. The resulting observable is then the identity:
    $$\obs = I$$
\end{definition}

\begin{definition}[\Mountain] $k$-valued hamming weight.
    This observable partitions the computational bitstrings into subspaces according to their hamming weight, with corresponding eigenvalues. We define:
    \[
    \hat{\obs} = \sum_{b \in \{0,1\}^{2^n}} |b|_h \ \ket{b}\bra{b}
    \]
    where $|b|_h$ is the hamming weight of bitstring $b$. This observable is intended when the solution is the size of a subset of the vertices of $\Gamma$ - that is, the expected solution lies in the interval $[0, n]$.
\end{definition}

\begin{definition}[\Crater] Binary hamming weight.
    Here, we partition the computational basis into two eigenspaces with $\pm 1$ eigenvalues:
    \[
    \check{\obs} = \sum_{b \in \{0,1\}^{2^n}}^{|b|_h < \frac{n}{2}} \ \ket{b}\bra{b} 
        \; - \sum_{b \in \{0,1\}^{2^n}}^{|b|_h > \frac{n}{2}} \ \ket{b}\bra{b}
    \]
    This observable naturally reflects a binary classification task, but may also be used for finer grained classification when taking expectation values, as these will lie along the continuous interval $[-1, 1]$.
\end{definition}
Notice that when n is even, the \Crater observable does not span the entire hilbert space. This is to ensure that the $\pm 1$ eigen-subspaces have the same dimension, for an unbiased expectation.

\begin{remark}
    For the \Mountain and \Crater observables, the readout depends only on the hamming weight $|b|_h$ (i.e. on the subset size $|S(b)|$ rather than on which vertices are present). Accordingly, in this setting we do not require the learned representation to preserve the fixed vertex-indicator interpretation used in Bitstring. 
    
    At the level of basis states, the models using \Crater are only trained to place weight on bitstrings of the appropriate sizes, such that the final expected value matches the maximum clique size. They are otherwise free to use any internal basis/representation compatible with the observable.
    In contrast, the \Mountain observable allows less flexibility as the hamming weight of bitstrings is conditioned to encode the clique size directly, although the actual node encoding does not necessarily match the vertices present.
\end{remark}

Note that all the observables considered above are $S_n$-invariant. This is trivial for $O=I$, and the invariance of the \Mountain and \Crater observables follows from the invariance of the hamming weight with respect to permuting bitstrings. 

This can be phrased equivalently at the level of eigenvalues. For \Mountain and \Crater, the eigenvalues are assigned solely by hamming weight, and hence are constant on $S_n$-orbits of computational basis states. Consequently, conjugation by $\phi(\sigma)$ preserves each eigenspace for any $\sigma \in S_n$, leaving the observable unchanged.

\subsection{Loss Functions}

For a given graph $\Gamma$, let $Y(\Gamma)\subseteq\{0,1\}^n$ be the set of target bitstrings and let $\overline{Y}(\Gamma)$ denote its complement. In the idealised setting, the desired output is any state whose measurement support lies entirely on $Y(\Gamma)$. An unbiased target is the uniform superposition over all of $Y(\Gamma)$,
\[
|\psi_\Gamma^\star\rangle := \frac{1}{\sqrt{|Y(\Gamma)|}}
\sum_{b\in Y(\Gamma)} |b\rangle
\]
although the loss functions considered here do not require uniformity over $Y(\Gamma)$. Fix a graph $\Gamma$ and denote the model output state by
\[
\rho_{\theta}(\Gamma) := U(\Gamma,\theta)\,\rho(\Gamma)\,U(\Gamma,\theta)^\dagger.
\]
When measuring in the computational basis, this induces the distribution
\[
\mathds{P}_{\theta}(b \mid \Gamma) := \langle b|\rho_{\theta}(\Gamma)|b\rangle,
\qquad b\in\{0,1\}^n.
\]

\begin{definition}\label{def:distloss}
    We use a \textit{distribution loss} that penalises probability mass assigned to incorrect (non-maximum) outputs, without enforcing any particular distribution over the correct superposition:
    \[
    \mathcal{L}_{\mathrm{dist}}(\Gamma,\theta)
    := \sum_{b \in \overline{Y}(\Gamma)} \mathds P_{\theta}(b \mid \Gamma)^2.
    \]
\end{definition}

Note that the distribution loss equals $0$ whenever the support of $\mathds P_\theta(\cdot\mid\Gamma)$ is contained within $Y(\Gamma)$, regardless of how probability is distributed among the correct bitstrings.

To increase sensitivity when the total wrong mass is small, we also consider a logarithmic variant.

\begin{definition}\label{def:logwrong}
   The \textit{log-wrong distribution loss} is given by 
   \[
   \mathcal{L}_{\mathrm{logwrong}}(\Gamma,\theta)
    := \log\!\Big(\varepsilon + \sum_{b \in \overline{Y}(\Gamma)} \mathds P_{\theta}(b \mid \Gamma)\Big),
    \]
with a small constant $\varepsilon>0$ for numerical stability.
\end{definition}

Finally, for evaluation we report an argmax-based success criterion.  Let
\[
b^\star(\Gamma,\theta) \in \arg\max_{b} \mathds{P}_{\theta}(b\mid\Gamma)
\]
be a most-likely bitstring under the model distribution.

\begin{definition}\label{def:argmaxloss}
     We define the argmax loss
    \[
    \mathcal{L}_{\mathrm{argmax}}(\Gamma,\theta)
    := \mathds{1}\!\left[b^\star(\Gamma,\theta)\notin Y(\Gamma)\right],
    \]
    which is $0$ if the most probable prediction encodes a bitstring in $Y(\Gamma)$ and $1$ otherwise.    
\end{definition}

Note that the loss $\mathcal{L}_{\mathrm{argmax}}$ is non-differentiable and we use it only as an evaluation metric rather than as a training objective. 

For the \Mountain and \Crater observables, we additionally consider expectation-based loss functions.
Write $\trace{\obs(X) \rho_{\theta}(\Gamma)}$ for the model output, and let $y(\Gamma)$ be the target label.

\begin{definition}\label{def:mseloss}
    The mean-squared error (MSE) loss is 
    \[
    \mathcal{L}_{\mathrm{MSE}}(\Gamma,\theta)
    := \left(
        \trace{\obs(X) \rho_{\theta}(\Gamma)}  - y(\Gamma)
    \right)^2
    \]
\end{definition}

\begin{definition}\label{def:mountainloss}
    The \Mountain loss is 
    \[
    \mathcal{L}_{\mathrm{\Mountain}}(\Gamma,\theta)
    := (1 - \alpha) \cdot \mathcal{L}_{\mathrm{MSE}}(\Gamma,\theta) + \alpha \cdot \mathcal{L}_{\mathrm{logwrong}}(\Gamma,\theta)
    \]
    for a hyper-parameter $\alpha$ that controls whether to target the single-shot probability, or correctness of the expectation value.
\end{definition}

If the task contains discrete labels, the output space may be binned, with target interval given by $y_{\mathrm{\pm}}(\Gamma)$, such that $y(\Gamma) \in y_{\pm}(\Gamma)$.
\begin{definition}\label{def:craterloss}
    We define the \Crater loss as the MSE loss with an extra mis-classification penalty:
    \[
    \mathcal{L}_{\mathrm{\Crater}}(\Gamma,\theta)
    := \mathcal{L}_{\mathrm{MSE}}(\Gamma,\theta) + \alpha \cdot \mathds{1}\left[\trace{\obs(X) \rho_{\theta}(\Gamma)}  \notin y_{\pm}(\Gamma)\right]
    \]
\end{definition}

%% file: Paper/experiments/tasks.tex
\section{Tasks}\label{sec:tasks}

In this section, we define the optimisation problems that will be used to illustrate the theoretical framework developed above. In the experiments that follow, we test $S_n$-invariant models constructed from the components defined in the preceding section and show that Semi-Symmetric models exhibit stronger trainability and generalisation compared to unsymmetrised counterparts on these tasks.

\subsection{The Maximum Clique Problem}

Given a simple undirected graph $\Gamma = (V,E)$ with $|V| = n$, a \emph{clique} is a subset
$S \subseteq V$ such that every pair of vertices in $S$ is connected by an edge.
The \emph{maximum clique problem} (Max-Clique) returns a clique of maximum cardinality, denoted $\omega(\Gamma)$, for a given input graph $\Gamma$.

Recall that under the \Bitstring observable (Definition \ref{def:bitstring}), we interpret each computational basis state \(|b\rangle,\; b\in\{0,1\}^n\) as the indicator function of a vertex subset
\(
S(b) := \{\, v\in \Gamma_v : b_v = 1 \,\}.
\) 
A bitstring $b$ is said to encode a clique if $S(b)$ is a clique in $\Gamma$.
Let $\mathrm{MC}(\Gamma)\subseteq\{0,1\}^n$ denote the set of bitstrings whose supports are cliques of maximum cardinality in $\Gamma$. Then the task becomes to develop a heuristic that returns a bitstring $b \in \mathrm{MC}(\Gamma)$ with high probability. 

The decision version of Max-Clique requires determining whether $\Gamma$ contains a clique of size at least $k$ and is NP-complete, while the optimisation problem is NP-hard.
Exact algorithms scale exponentially in the worst case, with best-known classical methods
based on branch-and-bound, backtracking, or integer programming.
Approximation is also hard: unless $\mathrm{P}=\mathrm{NP}$ the Max-Clique problem cannot be approximated within $n^{1-\varepsilon}$ for any $\varepsilon>0$.

Known heuristic classical approaches include greedy local search \cite{pullan2006dynamic}, simulated annealing \cite{GENG20075064},
and evolutionary methods \cite{evolution}.
Quantum approaches studied in the literature include adiabatic optimisation \cite{adiabatic},
QAOA-style encodings, and quantum-inspired tensor network heuristics \cite{dzibuyna}, though none
are known to provide general polynomial-time advantage. 

\subsection{Finding the Size of a Maximum Clique}

A related but strictly weaker task is to determine only the clique number $\omega(\Gamma)$,
without identifying the vertices forming a maximum clique.
This problem remains NP-hard, as it directly solves the decision version via thresholding.

From an algorithmic perspective, this task admits different heuristic strategies:
spectral bounds, semidefinite relaxations, and statistical estimators can sometimes
approximate $\omega(\Gamma)$ without producing a clique itself.
In the quantum setting, this relaxation allows us to define observables (\Mountain and \Crater) that only depend on the cardinality of vertex subsets rather than their identity, enabling models that discard explicit vertex-indicator semantics.

%% file: Paper/theory/pine.tex
\section{Hybrid Algorithms}

Taking inspiration from the structure of classical algorithms, we also consider hybrid approaches to the problem where a quantum heuristic is used to guide a classical search algorithm. Recursive-QAOA (RQAOA) \cite{bravyi_obstacles_2020} is the prototypical example, in which a QAOA sub-routine is used to determine how the overall problem is to be reduced at each step.
Recently, more sophisticated ways of splitting a problem into sub-problems have been considered \cite{brady_iterative_2024, esposito_hybrid_2024, finzgar_quantum-informed_2024, brady_quantum_2025}, in which the reduction choices at each step are directly informed by the problem. While \cite{esposito_hybrid_2024} implements a divide-and-conquer style algorithm, with a quantum component for the re-combination step, \cite{brady_iterative_2024, finzgar_quantum-informed_2024, brady_quantum_2025} consider greedy algorithms, directly generalising RQAOA. We follow the latter approach here.

\subsection{\Pine}
Inspired by the dynamic programming approach, we construct an outer skeleton algorithm \Pine that wraps a quantum heuristic designed to solve a specific sub-task.
In order to keep training costs low, we re-use the optimised parameters across multiple graphs at inference time - this is in contrast to the standard RQAOA approach that typically requires independent optimisation for each problem instance.

\subsubsection{Recursion}
At each step, \Pine accepts a distribution over the nodes in the current graph, which we sample from to obtain the next node to be added to the clique. The graph is then reduced to contain only that node's neighbours. The algorithm then proceeds to find the maximum clique in the resulting subgraph.

We obtain the node distribution as the marginal distribution over each node after applying a graph ansatz. This marginal distribution can either be approximated from samples, or simulated exactly. We form candidate cliques by successively adding vertices, and therefore restrict the recursion step to selecting which node to include next.
The algorithm terminates when the resulting subgraph is empty.

\begin{figure}[h]
    \centering
    \includegraphics[width=0.9\linewidth]{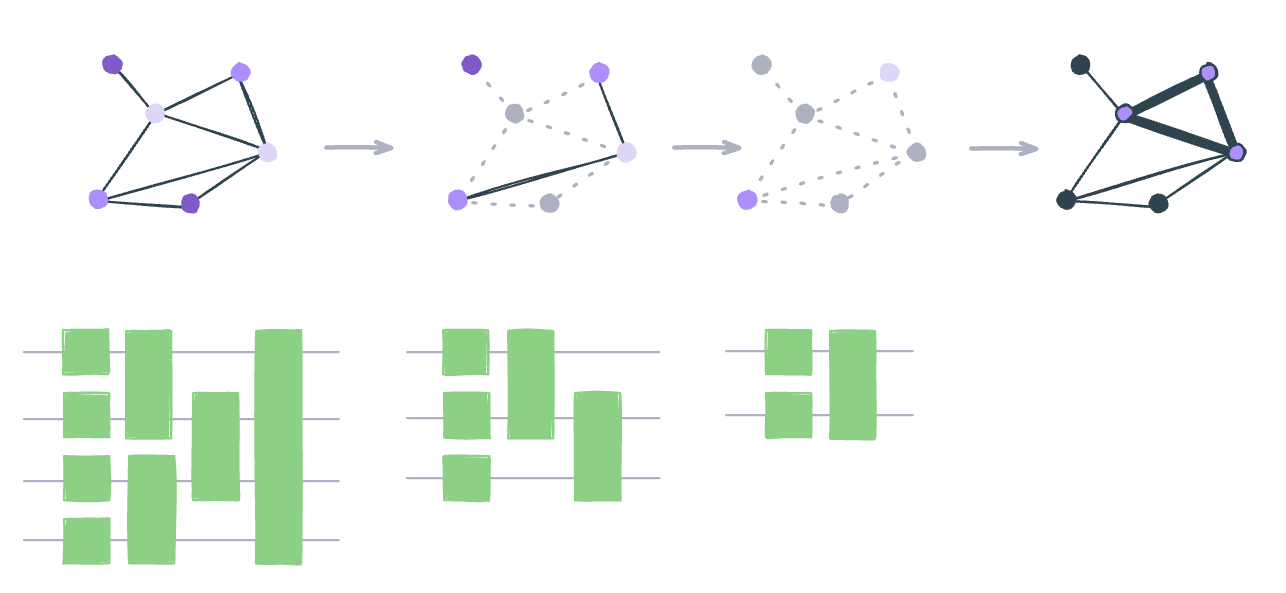}
    \caption{The recursion step of the \Pine algorithm. At each step, the quantum circuit assigns a distribution to the nodes and samples a node from this distribution to add to the clique. A new graph is then constructed from the selected node's neighbours before the algorithm repeats. When the resulting subgraph is empty \Pine terminates and outputs the constructed clique.}
    \label{fig:max-clique-recursion}
\end{figure}

%% file: Paper/experiments/maxcliquemodels.tex
\section{Maximum Clique Models}
We now compare the effects of symmetry and recursion on trainability and generalisation in the context of the max-clique problem. \autoref{tab:model_zoo_mc} summarises the configuration details of the models used. In all cases, the input state is a trainable permutation-invariant product state $|\psi\rangle^{\otimes n}$. Measurement is performed using the \Bitstring observable, and training uses the LogWrong loss.

The initial state and observable are $S_n$-invariant. For the \Rook and \Parliament models, the circuit is $S_n$-equivariant, and so by Proposition \autoref{prop:semi_sym_invariance}, the resulting loss is invariant under vertex relabelling. 

Unlike the \Rook model, the \MF model circuit is not permutation-equivariant (see Remark \ref{rem:MF_not_equivariant}) and therefore not $S_n$-invariant in general.

\begin{table}[h]
    \centering
    \begin{tabular}{l | lr | rrr}
        \textbf{Model} & \textbf{Layers} & \textbf{Params} & \multicolumn{3}{c}{\textbf{Train dataset}}\\
        & & & \textbf{$p$} & \textbf{$n$} & \textbf{Graphs per.}  \\
        \hline
        \hline
        \Rook          & 5    &   38 & 0.1-0.9 (step 0.1) & 2-6, 8 & all, 111\\
                       & 20   &  143 & & & \\
                       \cline{2-6}
                       & 5    &   38 & 0.1-0.9 (step 0.1) & 7,8,9,10 & 44\\
                       &      &      &                    & 12,14,16 & 9\\
        \hline
        \Parliament    & 5    &   38 & 0.5 & 2-16 & 91 \\
        \hline
        \MF & 5, 2 &  118 & 0.1-0.9 (step 0.1) & 2-6, 8 & all, 111 \\
                      & 20, 5& 1063 & & & \\
    \end{tabular}
    \caption{An overview of the models and datasets considered in this work for the Max-Clique task. The parameters $n$ and $p$ refer to the number of nodes and the edge probability that define the Erd\H{o}s--R\'enyi graph distribution.}
    \label{tab:model_zoo_mc}
\end{table}

\begin{table}[h]
    \centering
    \begin{tabular}{lll}
        \textbf{$p$} & \textbf{$n$} & \textbf{Graphs per.}  \\
        \hline
        \hline
        all & 2-6 & 207 (all)\\
        0.5 & 7-16 & 200\\
    \end{tabular}
    \caption{An overview of the testing dataset. Note that for models trained on graphs with 2-6 nodes, the test distribution overlaps with the training over this range, as the number of distinct graph classes in this range is limited.}
    \label{tab:model_zoo_kc}
\end{table}

\begin{remark}
Compared to the \Rook model, the \MF architecture uses significantly more trainable parameters per layer.
The choice of using at least 2 inner layers was motivated by maintaining sufficient expressivity to capture the \Rook ansatz, while the 5 inner layers used for 20 layer model were selected to maximise expressivity, as per \cite{sim_expressibility_2019}.
\end{remark}

\subsection{Training dynamics}\label{subsec:training_dynamics}
For each of the models in \autoref{tab:model_zoo_mc}, we plot the mean training gradient, as well as the loss and selection accuracy (Argmax) trajectories in \autoref{fig:max-clique:grad_summary}. We observe that the \Rook models typically train faster and observe better gradients than the \MF models. Even in the case where the number of parameters is approximately equivalent, we find that the \MF model takes longer to converge.
Moreover, \MF models converged to a lower final accuracy than the equivalent \Rook models trained on the same dataset. 

\begin{figure}[h]
    \centering
    \includegraphics[width=\linewidth]{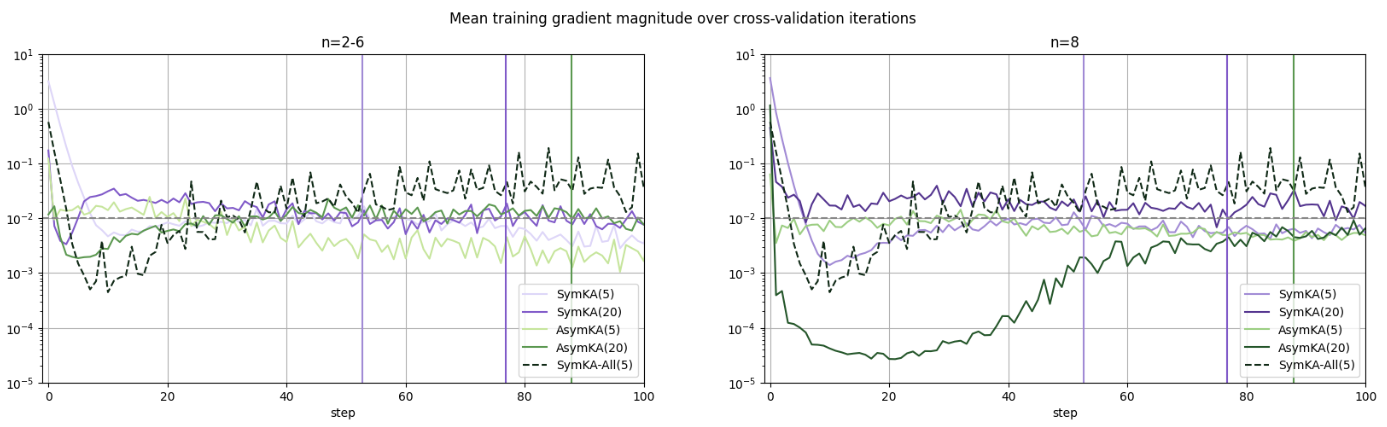}
    \\\vspace{1em}
    \includegraphics[width=\linewidth]{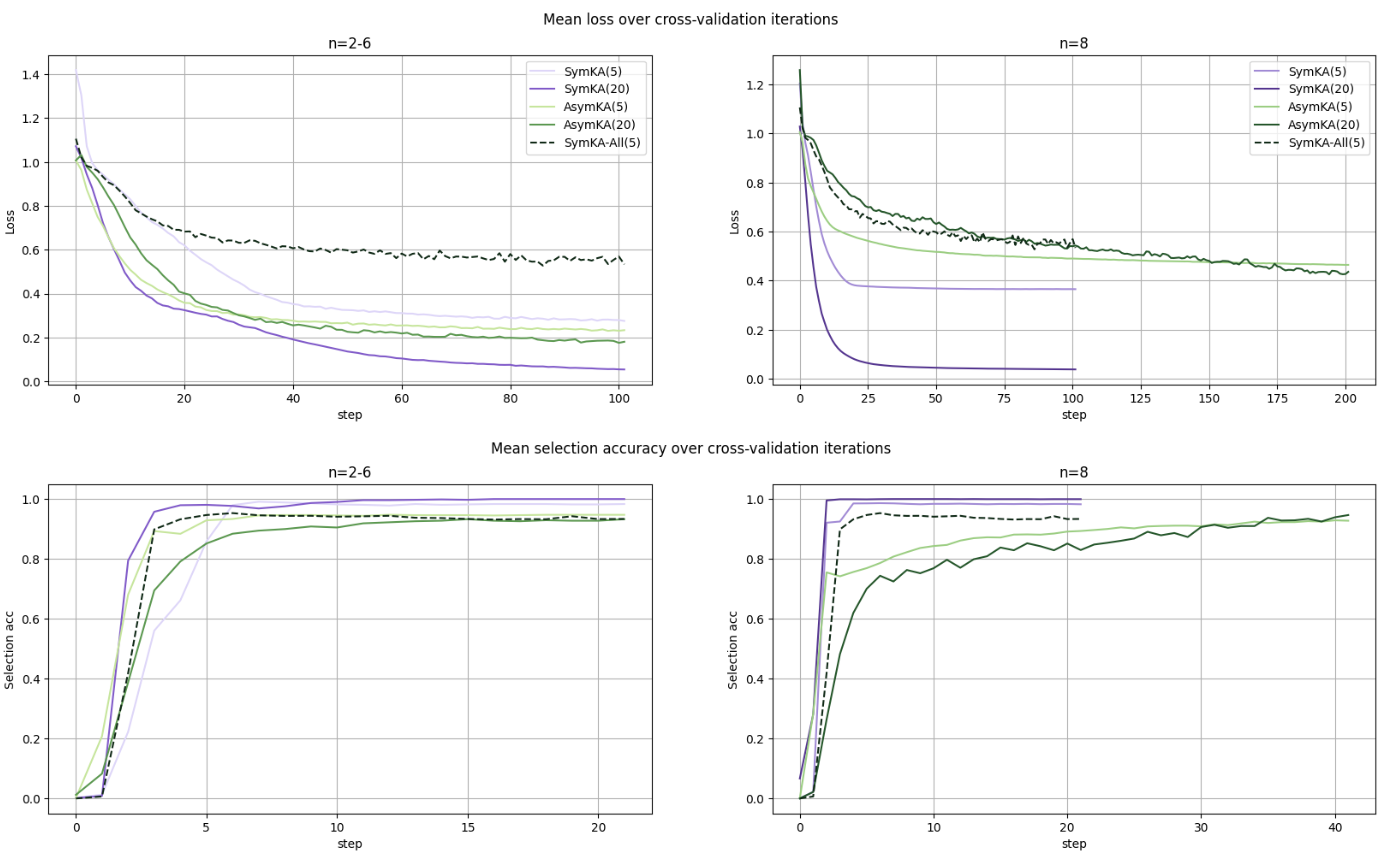}
    
    \caption{Average gradient magnitude over the training trajectory. Vertical lines show the average number of steps required to obtain the best model. Note that we did not evaluate the selection accuracy at every gradient step.}
    \label{fig:max-clique:grad_summary}
\end{figure}

\subsection{Generalisation capacity}\label{subsec:results-mc-generalisation}
Next, we consider the capacity for models to generalise to graphs from unseen distributions. The results are plotted in \autoref{fig:max-clique:acc_summary} for non-recursive \Rook and \MF models.

\subsubsection{\Rook and \MF models}
\begin{figure}[h]
    \centering
    \includegraphics[width=\linewidth]{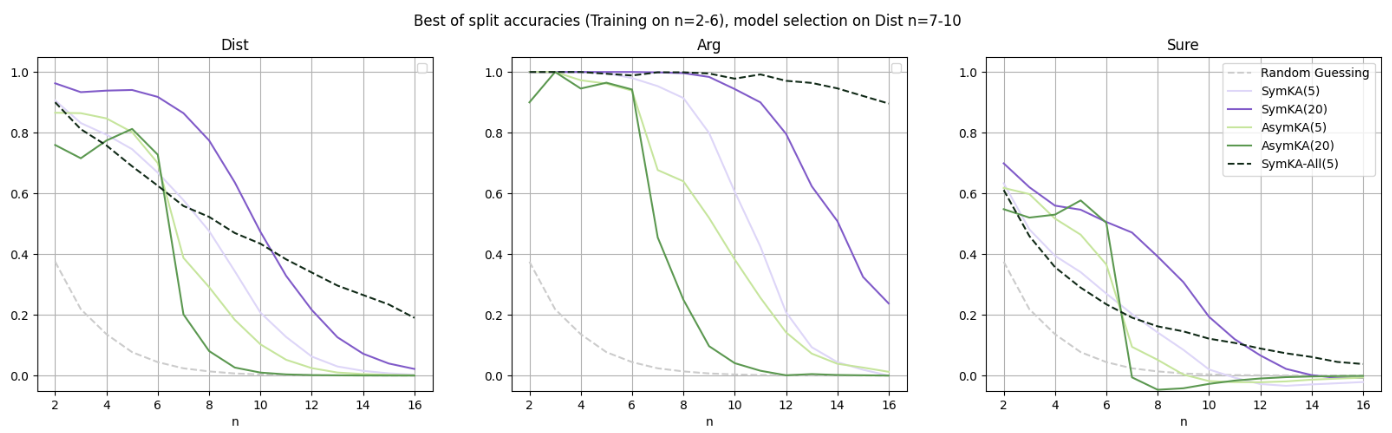}
    \\\vspace{1em}
    \includegraphics[width=\linewidth]{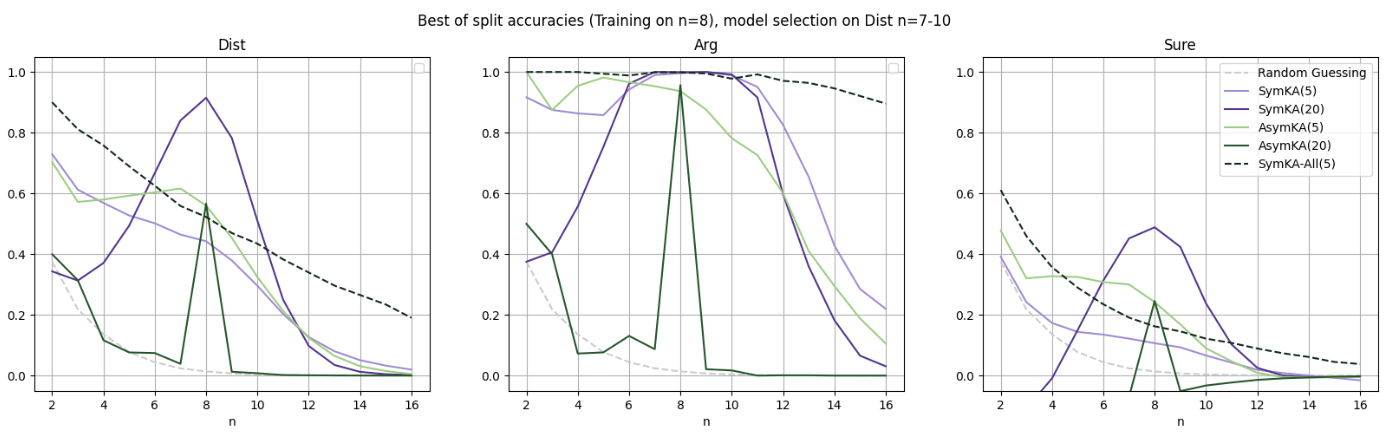}
    \caption{Average model performance over 5-fold cross-validation runs, with at least 2 iterations per split.}
    \label{fig:max-clique:acc_summary}
\end{figure}

We observe that the equivariant models tend to both display less overfitting over graphs of the sizes they were trained on, as well as a typically higher accuracy centered on the distribution closest to the training distribution.

Interestingly, while increasing the number of internal layers in \Rook models improves the accuracy of the model on its training data, this does not necessarily translate to improved generalisation. Indeed, we find that the training setup favours early stopping for the 5 layer \Rook, as the Argmax accuracy converges much faster than the distribution accuracy. While this favours generalisation, we find that this lets the \Rook model fall behind the 5 layer \MF model on the same dataset.

Conversely, for the \MF models we find that adding layers leads to worse performance overall, as models overfit on the training data without improvements to accuracy. This is likely due to the observed training dynamics above, making it difficult for the model to navigate towards good solutions.
The 20 layer \MF model exhibits the steepest decline, achieving no better than the random guessing baseline over unseen graph sizes.
In almost all cases (except the 5 layer \Rook model trained on 8 node graphs), we observe that \Rook models achieve better generalisation than comparable \MF models.
In \autoref{subsec:results-mc-equivariance}, we additionally investigate whether the \MF models are able to learn the symmetries of the data, and find that they do not.

We include a \Parliament model in order to verify whether using a single set of parameters for the full set of graphs is viable. Indeed, we find that the distribution accuracy appears to decay linearly (in contrast to the exponential decay in random guessing), while the Argmax accuracy remains high throughout.

\subsubsection{Fixed-angle generalisation}
For certain families of regular graphs, there exists a fixed-angle conjecture for QAOA circuits, where parameters that have been optimised on a single graph instance can be generalised without further training to other graphs with similar properties \cite{wurtz_fixed_2021, wybo_missing_2025}.
These results do not apply directly in our setting, as we consider Erd\H{o}s–R\'enyi graphs with various edge probabilities $p$ and node sizes $n$. For these graphs, the expected degree of each vertex is $np$ and thus not fixed.
Additionally, the fixed angles are introduced primarily as a result of locality of the QAOA algorithm at fixed depth $L$.
Indeed, Max Clique (and Max Independent Set) exhibit the overlap-gap property, which places an asymptotic bound on the solution quality that can be achieved via local algorithms \cite{farhi_quantum_2020}.
In this work we consider an ansatz over complete graphs, which breaks the locality assumption.

From the \Parliament models, we identify the fixed parameters that achieve the best average performance over graph sizes from 2 to 16 nodes.
We additionally demonstrate that models can converge towards these parameters by training on a set of smaller graphs, and hence at a much lower cost. In \autoref{fig:parliament_approx}, a visualisation is given for the best fixed angle models obtained during training. Though not all models trained this way converge to these optimal parameters, the fact that we can train models at all suggests that the same technique could be applied to find models at larger scales where the optimal parameters cannot be trained for directly.

\begin{figure}[h]
    \centering
    \includegraphics[width=\linewidth]{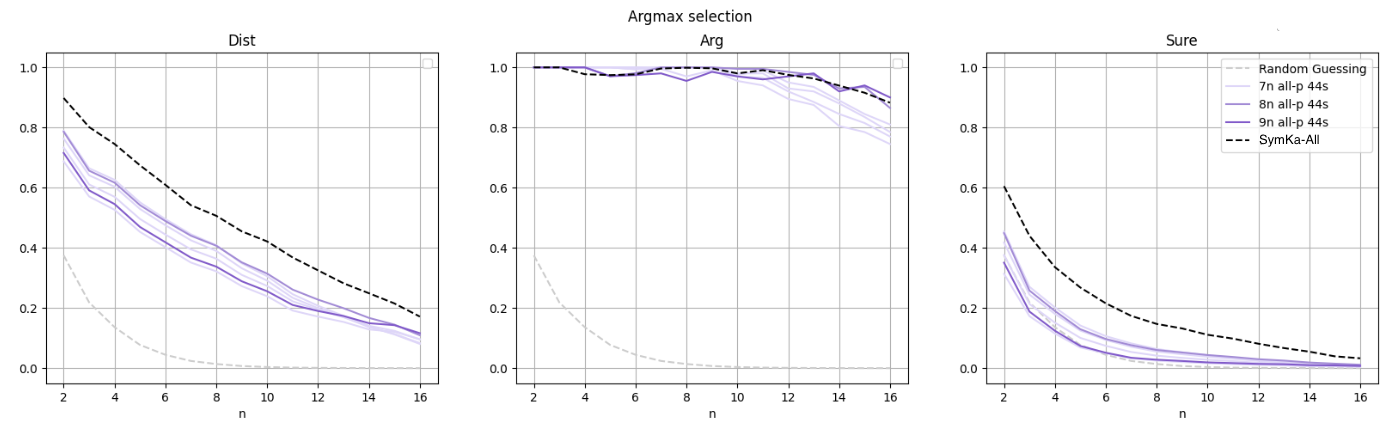}
    \caption{Models reaching \Parliament-like generalisation curves via training on graphs of a single size only.}
    \label{fig:parliament_approx}
\end{figure}

\subsubsection{Model extrapolation}
Alongside the generalisation modes, we also find that by using a different criterion for model selection (SurenessArgmax), the resulting models perform very well on graphs of a fixed size. Moreover, the optimal parameters for the different graph sizes are all similar. We hence investigate a warm-start scheme for training models on graphs of size $n+m$ using the parameters optimised from graphs of size $n$. Indeed, as demonstrated in \autoref{fig:macaw}, provided the step size $m$ is sufficiently small, we find that this allows for fast convergence with large gradient magnitudes that do not decay with the system size.

While the model accuracy tends towards the \Parliament model performance, the training process is overall more achievable. Thus, while training a 20 layer \Parliament model would be infeasible, it could be approximated by a series of warm-started models.

\begin{figure}[h]
    \centering
    \includegraphics[width=\linewidth]{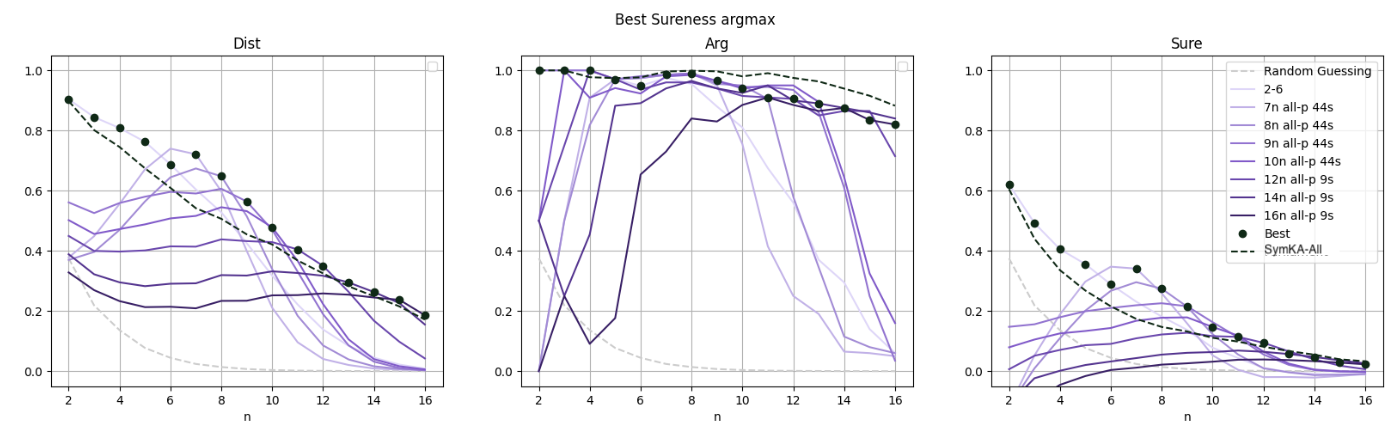}
    \\\vspace{1em}
    \includegraphics[width=\linewidth]{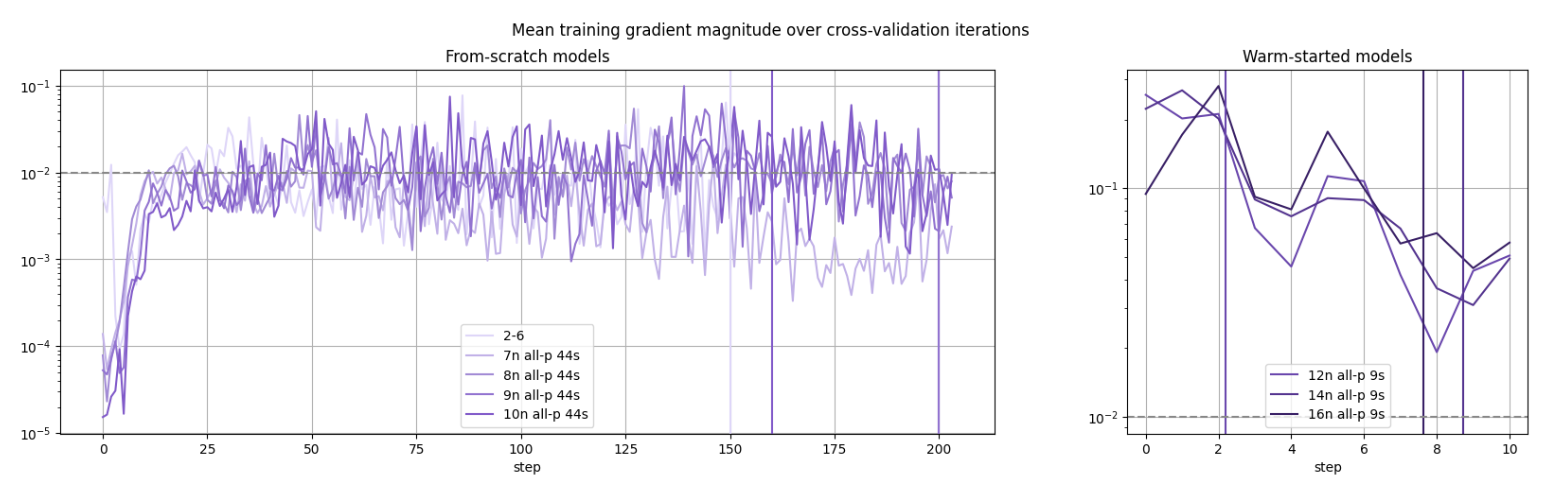}
    \caption{The best single graph-size model accuracies. Each line represents a single model. For models trained on graph sizes with $n = 12,14,16$, the parameters were warm started from a model trained on $n-2$ sized graphs.}
    \label{fig:macaw}
\end{figure}

\subsubsection{\Pine models}
\begin{figure}[h]
    \centering
    \includegraphics[width=0.49\linewidth]{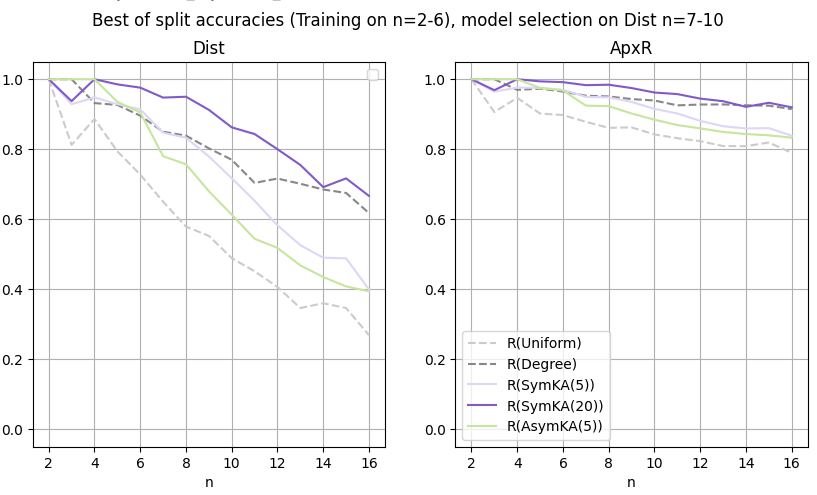}
    \hfill
    \includegraphics[width=0.49\linewidth]{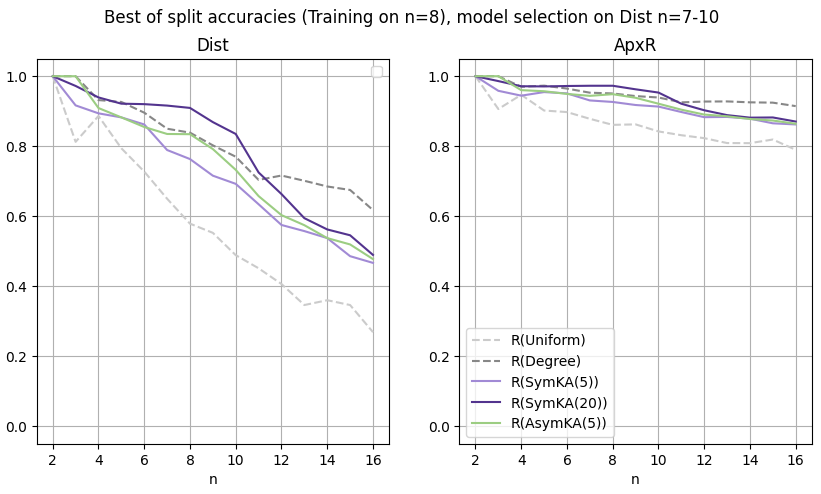}
    \caption{Average \Pine model performance over 5-fold cross-validation runs, with at least 2 iterations per split.}
    \label{fig:max-clique:acc_summary_pine}
\end{figure}
For \Pine models, as per \autoref{fig:max-clique:acc_summary_pine}, we observe that the single-shot (distribution) accuracy is significantly improved, and indeed is higher compared to any choice of inner (non-recursive) model.
This improvement is in part derived from the fact that \Pine ensures that the output solution is guaranteed to be a clique, unlike when sampling from the raw circuit distribution.

Specifically, instead of requiring the circuit to place high probability mass on an entire maximum clique at once, the \Pine procedure only needs to consistently select vertices that remain compatible with \textit{some} maximum clique. This leads to a multiplicative improvement in success probability across steps, as errors are filtered out early and the problem size shrinks, concentrating probability mass onto increasingly clique-dense subgraphs.

The hybrid nature of \Pine also allows better allocation of the same amount of quantum resources - given a certain budget of quantum computation time, executing a series of smaller circuits interspersed with classical computations leads to better results than executing a larger quantum circuit to solve the problem in a single step. This flexibility is particularly well-suited to the constraints of Noisy Intermediate-Scale Quantum (NISQ) era devices.

Encouragingly, we also find that the quantum heuristic is competitive relative to classical baselines, although it only surpasses the upper baseline in the range where the inner model has a high distribution accuracy.
We consider two classical heuristics as drop-in replacements for the quantum component in the \Pine algorithm.
\Uniform outputs a uniform distribution over the nodes. This baseline thus acts as a lower bound, and quantifies the improvement obtained from the \Pine method in isolation.
The second heuristic, \Degree, selects vertices with probability proportional to their degree. 

%% file: Paper/experiments/findkmodels.tex
\section{Finding Maximum Clique Size Models}

Next, we turn to the task of only identifying the size of the largest clique (the clique-number), without needing to supply the clique itself.
Notice that this task is comparatively easier than the Max-Clique task, while remaining NP-Hard. This is illustrated by the random guessing baseline, which is significantly higher for the Clique-Number task. The baseline is computed as the probability of being correct when sampling a random bitstring and reducing it via the \Mountain observable.
\begin{table}[h]
    \centering
    \begin{tabular}{l | lr | rrr}
        \textbf{Model} & \textbf{Layers} & \textbf{Params} & \multicolumn{3}{c}{\textbf{Train dataset}}\\
        & & & \textbf{$p$} & \textbf{$n$} & \textbf{Graphs per.}  \\
        \hline
        \hline
        \Mountain-\Rook & 5    &   38 & 0.1-0.9 (step 0.1) & 8 &  111\\
        \Crater-\Rook   & 5    &   38 & 0.1-0.9 (step 0.1) & 8 &  111\\
        \hline
        \Mountain-\MF   & 5, 2 &  118 & 0.1-0.9 (step 0.1) & 8 &  111\\
        \Crater-\MF     & 5, 2 &  118 & 0.1-0.9 (step 0.1) & 8 &  111\\
    \end{tabular}
    \caption{An overview of the models and datasets considered in this work for the Clique-Number task. The parameters $n$ and $p$ refer to the number of nodes and the edge probability that define the Erd\H{o}s--R\'enyi graph distribution.}
    \label{tab:model_zoo_kc}
\end{table}

\subsubsection{\Mountain Models}
The \Mountain models uses either \Rook or \MF circuits with 5 layers (2 inner layers for the \MF model), the permutation-invariant initial random product state $|\psi\rangle^{\otimes n}$, the \Mountain observable, and \Mountain loss with $\alpha = 0.5$.

The initial state and observable are invariant, while the circuit is $S_n$-equivariant in the \Rook case unlike the \MF case.
Loss-invariance thus follows from Proposition \autoref{prop:semi_sym_invariance} for the \Rook model alone.
The training data characteristics are summarised in \autoref{tab:model_zoo_kc}.

\subsubsection{\Crater Models}
Similarly, we consider both \Rook and \MF circuit configurations, as detailed in \autoref{tab:model_zoo_kc}. In both cases, we use the permutation-invariant random initial state $|\psi\rangle^{\otimes n}$, the \Crater observable, and the \Crater loss with $\alpha = 0.5$.

As with the previous cases, the initial state and observable are invariant, while the circuit is equivariant for the \Rook model and non-equivariant for \MF. Loss-invariance again follows from Proposition \autoref{prop:semi_sym_invariance} for \Rook.

\subsection{Training dynamics}
Analogous to the Max-Clique models, we find that the \MF models typically exhibit smaller gradients and converge to worse accuracies over the training iterations. These results are displayed in \autoref{fig:k-clique:grad_summary}.
\begin{figure}[h]
    \centering
    \includegraphics[width=\linewidth]{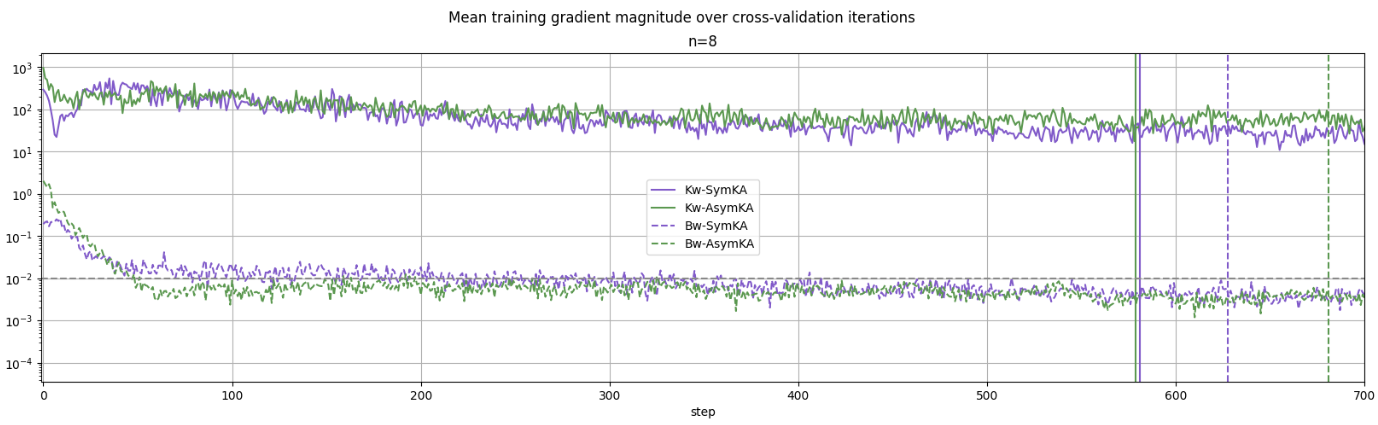}
    
    \caption{Average gradient magnitude, loss value and selection accuracy over the training trajectory. Vertical lines show the average number of steps required to obtain the best model.}
    \label{fig:k-clique:grad_summary}
\end{figure}

\subsection{Generalisation capacity}
We find that the Clique-number models tend to fit best to data-points drawn from a distribution with the same number of nodes as seen in training, following the Max-Clique model behaviour.
The \Crater models exhibit a less smooth generalisation curve, as the eigenvalues of the observable differ between odd and even graph sizes. However, we find that the \Crater-\Rook models generalise better than the \Mountain models particularly on even graph sizes. Meanwhile the \Crater-\MF models significantly overfit the training dataset.

\begin{figure}[h]
    \centering
    \includegraphics[width=\linewidth]{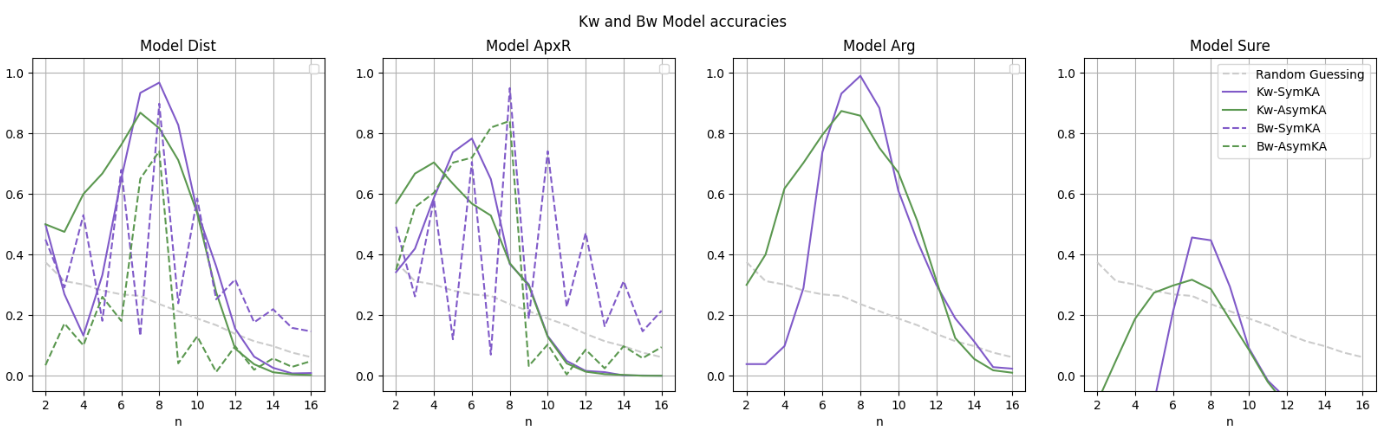}
    \caption{Average model performance over 5-fold cross-validation runs, with at least 2 iterations per split. Note that for Crater models, the Argmax and Sureness accuracies are not defined. The Distribution and Approximation ratio accuracies are computed relative to the expectation values.}
    \label{fig:k-clique:acc_summary}
\end{figure}

\subsubsection{Observable shifts}
While the model's accuracy degrades, by inspecting the raw output distributions we note that the model nevertheless manages to distinguish between graphs with distinct clique sizes.
Indeed, as per \autoref{fig:expectation_drift}, we see that the mean expectation values for a given clique size undergo an almost linear shift as $n$ increases.
While the exact shift depends on the trained model, its reliable shape allows us to construct a `finetuned' model that applies a learnt shift as an additional post-processing step.

To infer the shift, we take a subset of data to be used as a \textit{compositional-validation} set. Using these extra points, we fit the weights on the observable in order to better align the output distributions. The results are plotted in \autoref{fig:kexp_shifted_summary}.
While the regression does not perfectly align, it allows us to extend the generalisation capacity of the model with only a few extra evaluations required. As per \autoref{fig:kexp_shifted_summary}, even a single sample for each node size is sufficient signal to provide a significant improvement.
Indeed, for \Mountain-\Rook models, this is sufficient to match the best possible fit, computed over the entire test data, while for the \Mountain-\MF models the accuracy does not improve as much.

We find that \Crater models were less likely to generalise well, and moreover required different fits depending on the parity of the graph size, as well as additional datapoints in order to approximate the best fit. There was also greater variation in the generalisation potential of the \Crater-\Rook models: while some models managed to maintain distinguishable expectations for each value of $k$, many did not extend beyond a short window around the training dataset, limiting the effectiveness of the fit.

\begin{figure}[h]
    \centering
    \includegraphics[width=0.8\linewidth]{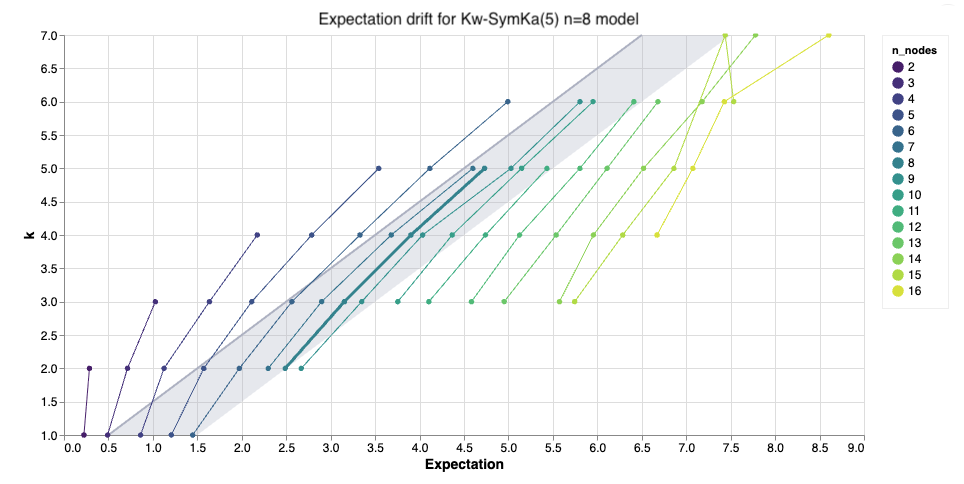}
    \caption{The drift in expected values for a \Mountain-\Rook model as the number of nodes change. We plot the mean expectation value of the model for graphs of a given node and max-clique size. The target $k$ value is on the \textbf{y}-axis, while the predicted value is on the \textbf{x}-axis. Points that fall within the band classify the graphs correctly. Graphs of size seen in the training distribution are shown with a bold line.
    }
    \label{fig:expectation_drift}
\end{figure}

\begin{figure}[h]
    \centering
    \includegraphics[width=0.8\linewidth]{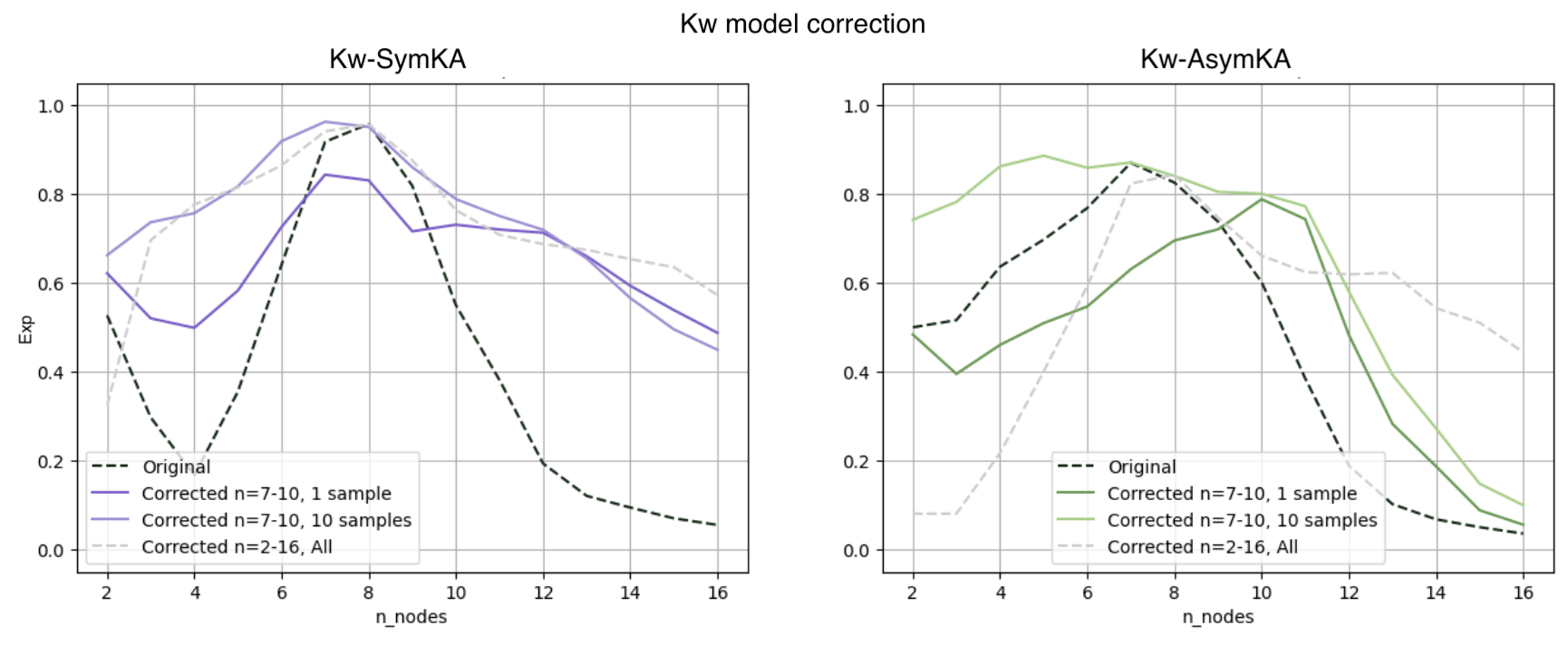}
    \\\vspace{1em}
    \includegraphics[width=0.8\linewidth]{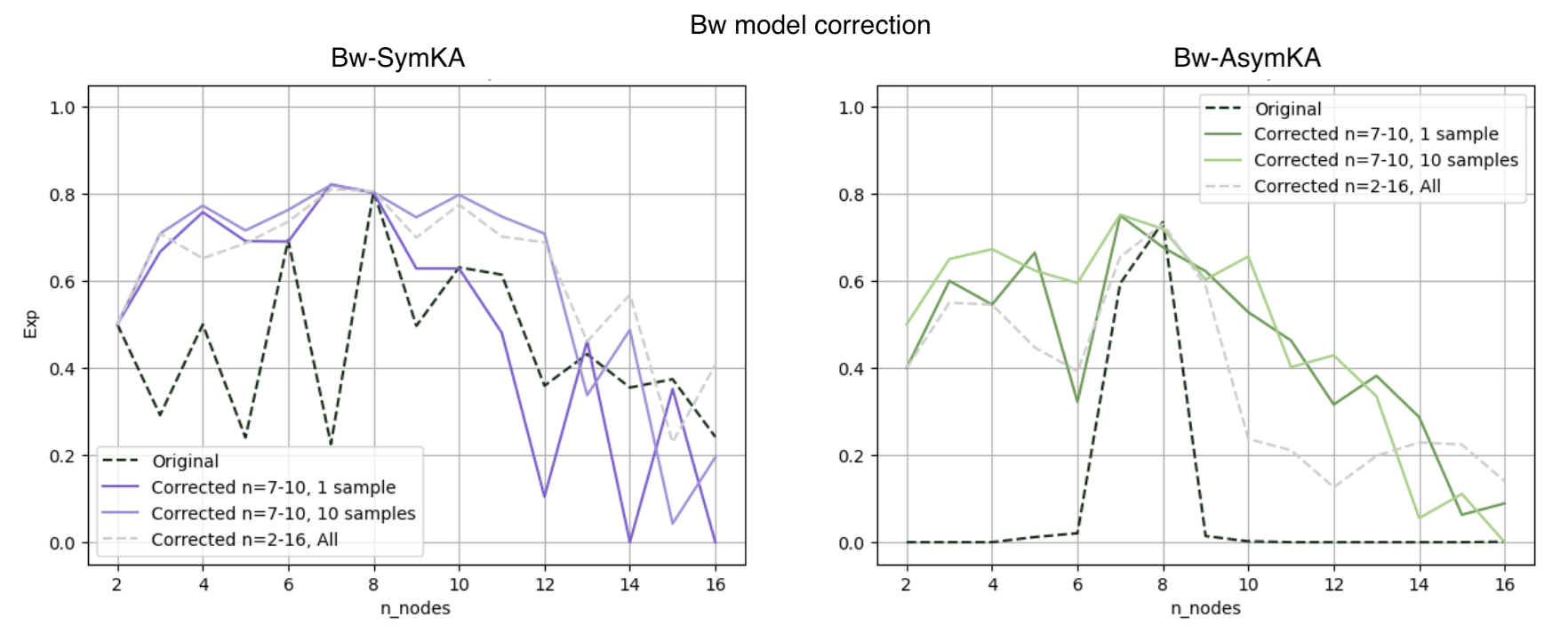}
    \\\vspace{1em}
    \includegraphics[width=0.8\linewidth]{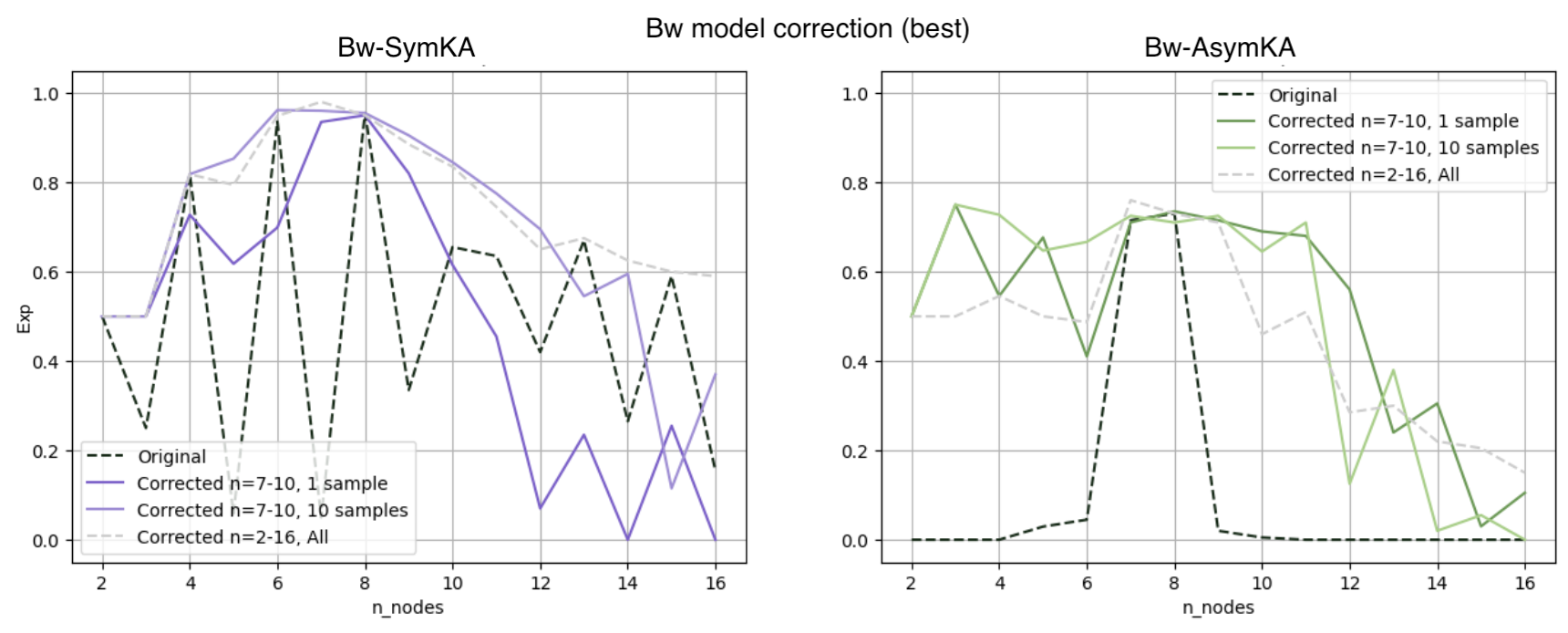}
    \caption{
        Average model performance over 5-fold cross-validation runs, with at least 2 iterations per split.
        We plot the original and corrected accuracies. Exp is the accuracy of the expectation value.
        We include as reference points the best possible fit, obtained via regression on the entire test dataset, and the original accuracy.
        For \Crater models, we additionally plot the best model fit, as there was more variation among the models.
    }
    \label{fig:kexp_shifted_summary}
\end{figure}

%% file: Paper/experiments/conclusion.tex
\section{Conclusion}\label{sec:conclusion}
In this work we described a general Semi-Symmetric framework for constructing group-invariant loss functions using parametrised quantum circuits which avoids the impractical constraint of model constructions without awareness of individual data entries that is necessary for full invariance. Furthermore, we demonstrated that these Semi-Symmetric models retain the trainability and generalisation benefits inherited from symmetry-induced inductive bias. 

We used this framework to design the permutation-equivariant \Rook ansatz for the Max-Clique problem. The \Rook ansatz achieves comparable or superior accuracy to the \MF architecture while using significantly fewer trainable parameters per layer. Also, by including anti-edge layers, the \Rook architecture remains expressive enough to encode optimal solutions for graphs with sparse or empty edge sets, addressing a known locality limitation of QAOA-style encodings.

In our experiments, the Semi-Symmetric models constructed using the \Rook ansatz exhibited consistently larger gradient magnitudes than non-equivariant counterparts, providing empirical evidence that symmetry-constrained loss functions - rather than stricter, fully invariant architectures - can mitigate gradient suppression during training. Correspondingly, we observed that the \Rook models consistently converged faster and to higher final accuracies than \MF models trained on identical datasets, with the gap widening as the number of layers increases. Notably, deeper \Rook models retained stable training dynamics, whereas increasing the number of layers in \MF models lead to overfitting and degraded generalisation.

Moreover, our experiments showed that models trained on small graph sizes (2–6 or 8 nodes) generalised meaningfully to graphs up to 16 nodes, with equivariant models showing less overfitting and higher accuracy on out-of-distribution graph sizes than unsymmetrised versions. These results indicate that symmetry-induced bias provides a more sample-efficient route to generalisation than simply increasing training data volume. 
For symmetrised models, we found that a small number of training restarts is sufficient to obtain models that generalise with a similar profile to the \Parliament models, suggesting that only a small overhead in model selection is required to obtain models with comparable generalisation accuracy to those trained on the entire distribution. In practice, a single fixed set of parameters trained on small instances can be reused at inference time across substantially larger graphs, demonstrating that the compositional approach does not require per-instance retraining.

For the simpler task of clique-number estimation, we observed  that compositional generalisation can be extended further through lightweight post-hoc calibration. In particular, a systematic drift in output expectations across graph sizes can be corrected with as few as one sample per node size, suggesting that the gap between training and deployment distributions need not be a fundamental barrier to scalability.

Finally, our results verified that the \Pine heuristic achieves higher single-shot clique-finding accuracy than any non-recursive inner model alone. This improvement is partially attributable to the progressive reduction in problem size at each recursion step by decomposing the max-clique search into a sequence of smaller sub-problems. However, \Pine also allows a fixed quantum resource budget to be deployed more effectively than a single large circuit evaluation, making it particularly well-suited to NISQ-era hardware constraints. Promisingly, we found that the \Pine quantum heuristic performance is competitive with equivalent classical degree-based greedy baselines.

\subsection{Outlook}

The results in this work support the broader hypothesis that compositional circuits - assembled from small, trainable subcircuits - offer a viable pathway toward scalable quantum learning models. In addition, our results show that hybrid and recursive formulations of quantum heuristics can provide measurable improvements in scalability and inference accuracy. In future work, we will investigate how the branching structure of these recursive procedures can itself be exploited to further improve training dynamics, and explore models in which the internal heuristics are learned jointly within the hybrid structure rather than fixed in advance.

%% file: Paper/appendix/A.tex
\section{A. Group Representations}\label{app:background}

\subsection{Group Representations}\label{subsec:groups}

Assume $\mathbb{k}$ is a characteristic zero field, and let $G$ be a group.

\begin{definition}
A \textit{representation} of $G$ (over $\mathbb{k}$) is a pair $(\pi, V)$ where:
\begin{itemize}
    \item $V$ is a vector space over $\mathbb{k}$
    \item $\pi: G \to \mathrm{GL}(V)$ is a group homomorphism. 
\end{itemize}
\end{definition}

The map $\pi$ defines an action of $G$ on $V$ by: 
\[g \cdot v := \pi(g)v.\]

Recall, the symmetric group $S_n$ is the group of all bijections (permutations) from the set $[n]:= \{1,...,n\}$ to itself. 

\begin{example}\label{ex:sn_perm_rep}Let $G=S_n$, and let $V=\mathbb{k}^n$. Define a representation $S_n \to \mathrm{GL}(V)$ via
\[
\sigma \cdot \left(x_1 e_1+\ldots+x_n e_n\right)=x_1 e_{\sigma(1)}+\ldots+x_n e_{\sigma(n)}
\]
for each $\sigma \in S_n$ where $e_1, ..., e_n$ is the standard basis of $V$. This is called the \textit{permutation representation} of $S_n$.
\end{example}

Generally, given a group action $G \times \mathcal{X} \to \mathcal{X}$ on a set, each group element $g \in G$ permutes the set $\mathcal{X}$ by
\[
g \cdot X = X^\prime \in \mathcal{X}.
\]
The representation in Example \ref{ex:sn_perm_rep} can be viewed as a special case with $\mathcal{X}= \{ e_1, \cdots e_n\}$. Just as in the example, a group action on $\mathcal{X}$ can be extended linearly to a representation on $\mathbb{k}[\mathcal{X}]$. We will refer to this as the permutation representation of $G$ on $\mathcal{X}$.

\begin{definition}
    Let $(V,\pi)$ and $(W,\pi)$ be representations of $G$. A map $f:V\to W$ is said to be $G$-equivariant if
\[
f(\pi(g)v)=\pi_W(g)f(v)\qquad \forall\, g\in G,\ \forall\, v\in V.
\]

Equivalently, for every $g\in G$, the following diagram commutes
\[
\begin{tikzcd}[
    row sep=2.5em,
    column sep=3.5em,
    nodes={inner sep=1.0ex},   
    arrows={shorten <= 0.3em, shorten >= 0.3em}
]
V \arrow[r,"f"] \arrow[d,"\pi(g)"'] & W \arrow[d,"\pi(g)"] \\
V \arrow[r,"f"'] & W
\end{tikzcd}
\]
\end{definition}

Note that linear $G$-equivariant maps are precisely the intertwining operators, or morphisms, in the category of representations of $G$ over $\mathbb{k}$.

\begin{definition}
    Let $(V,\pi)$ be a representation of $G$. A map $f : V \to W$ is $G$-invariant if
\[
f(\pi(g)v) = f(v)
\quad \text{for all } g \in G,\ v \in V.
\]

Equivalently, for every $g \in G$, the following diagram commutes:
\[
\begin{tikzcd}[
    row sep=2.5em,
    column sep=3.5em,
    nodes={inner sep=1.0ex},   
    arrows={shorten <= 0.3em, shorten >= 0.3em}
]
V \arrow[r,"f"] \arrow[d,"\pi(g)"'] & W \\
V \arrow[ur,"f"'] 
\end{tikzcd}
\]
\end{definition}

Note that $G$-invariant maps are simply $G$-equivariant operators between $(V,\rho)$ and the trivial representation of $G$ on $W$.

\subsection{Proofs of stated results}

\begin{proof}[Proof of Proposition \ref{prop:semi_sym_invariance}]

    Case 1: Suppose the circuit $\pqc$ is $G$-equivariant, while $\obs$ and $\rho$ are $G$-invariant. Then we have
    \begin{align*}
        \mathcal{L}(g \cdot X,\boldsymbol{\theta})& =\trace{\obs(X) \bf{U}( g \cdot X, \theta) \rho(X) \bf{U}(X, \theta)^{\dagger}} \\
        &= \trace{\obs(X) \phi(g) U(X, \theta) \phi(g)^{\dagger} \rho(X) \phi(g) \bf{U}(X, \theta) \phi(g)^{\dagger}}\\
        &= \trace{\obs(X) \phi(g) \bf{U}(X, \theta) \rho(X) \bf{U}(X, \theta) \phi(g)^{\dagger}} \\
        &= \trace{ \phi(g)^{\dagger}\obs(X) \phi(g) \bf{U}(X, \theta) \rho(X) \bf{U}(X, \theta) }\\
        &= \trace{ \obs(X) \bf{U}(X, \theta) \rho(X) \bf{U}(X, \theta) } \\
        &=\mathcal{L}( X,\boldsymbol{\theta}).
    \end{align*}

Case 2: Assume the observable $\obs$ is $G$-equivariant while the circuit and state are 
$G$-invariant. Then
\begin{align*}
\mathcal{L}(g \cdot X,\boldsymbol{\theta})
&=\trace{\obs(g \cdot X)\bf{U}(X,\theta)\rho(X)\bf{U}( X,\theta)^{\dagger}} \\
&=\trace{\phi(g)\obs(X)\phi(g)^{\dagger}\bf{U}(X,\theta)\rho(X)\bf{U}(X,\theta)^{\dagger}} \\
&=\trace{\obs(X)\phi(g)^{\dagger}\bf{U}(X,\theta)\rho(X)\bf{U}(X,\theta)^{\dagger}\phi(g)} \\
&=\trace{\obs(X)\bf{U}(X,\theta)\phi(g)^{\dagger}\rho(X)\phi(g)\bf{U}(X,\theta)^{\dagger}} \\
&=\trace{\obs(X)\bf{U}(X,\theta)\rho(X)\bf{U}(X,\theta)^{\dagger}} \\
&=\mathcal{L}(X,\boldsymbol{\theta}).
\end{align*}

Case 3: Assume the circuit and the observable are 
$G$-equivariant, while the state is $G$-invariant. Then
\begin{align*}
\mathcal{L}(g \cdot X,\boldsymbol{\theta})
&=\trace{\obs(g \cdot X)\bf{U}(g \cdot X,\theta)\rho( X)\bf{U}(g \cdot X,\theta)^{\dagger}} \\
&=\trace{\phi(g)\obs(X)\phi(g)^{\dagger}\phi(g)\bf{U}(X,\theta)\phi(g)^{\dagger}\rho(X)\phi(g)\bf{U}(X,\theta)^{\dagger}\phi(g)^{\dagger}} \\
&=\trace{\phi(g)\obs(X)\bf{U}(X,\theta)\phi(g)^{\dagger}\rho(X)\phi(g)\bf{U}(X,\theta)^{\dagger}\phi(g)^{\dagger}} \\
&=\trace{\obs(X)\bf{U}(X,\theta)\phi(g)^{\dagger}\rho(X)\phi(g)\bf{U}(X,\theta)^{\dagger}} \\
&=\trace{\obs(X)\bf{U}(X,\theta)\rho(X)\bf{U}(X,\theta)^{\dagger}} \\
&=\mathcal{L}(X,\boldsymbol{\theta}).
\end{align*}

Case 4:
Assume the observable and state are 
$G$-equivariant, while the circuit is 
$G$-invariant.
Then
\begin{align*}
\mathcal{L}(g \cdot X,\boldsymbol{\theta})
&=\trace{\obs(g \cdot X)\bf{U}( X,\theta)\rho(g \cdot X)\bf{U}( X,\theta)^{\dagger}} \\
&=\trace{\phi(g)\obs(X)\phi(g)^{\dagger}\bf{U}(X,\theta)\phi(g)\rho(X)\phi(g)^{\dagger}\bf{U}(X,\theta)^{\dagger}} \\
&=\trace{\obs(X)\phi(g)^{\dagger}\bf{U}(X,\theta)\phi(g)\rho(X)\phi(g)^{\dagger}\bf{U}(X,\theta)^{\dagger}\phi(g)} \\
&=\trace{\obs(X)\bf{U}(X,\theta)\rho(X)\bf{U}(X,\theta)^{\dagger}} \\
&=\mathcal{L}(X,\boldsymbol{\theta}).
\end{align*}

By cyclicity of the trace, exchanging the roles of $\obs$ and $\rho$ results in identical proofs. Hence the cases where only $\rho$ is equivariant, or where $\pqc$ and $\rho$ are equivariant, follow immediately by symmetry.

The case where all three are invariant is immediate, as is the case where all three are equivariant as the whole product is conjugated by 
$\phi(g)$.

\end{proof}

\begin{proof}[Proof of Proposition \ref{prop:rook_is_symmetric}]

Let $\phi : S_n \to U(\mathcal{H})$ denote the unitary permutation
representation acting by permuting qubits.
By construction, for any $\sigma \in S_n$, the node and edge hamiltonians satisfy
\[
\phi(\sigma)\, H_v \,\phi(\sigma)^\dagger = H_{\sigma(v)}, 
\qquad
\phi(\sigma)\, H_{uv} \,\phi(\sigma)^\dagger = H_{\sigma(u)\sigma(v)} .
\]
Since $\sigma$ relabels the vertex and edge sets according to those of $\sigma \cdot \Gamma$, the sums defining each layer are equivariant under conjugation by $\phi(\sigma)$. Consequently, the circuit obeys the equivariance condition introduced in Section \ref{sec:semi_sym_circs}:
\[
U(\sigma \cdot \Gamma, \boldsymbol{\theta})
=
\phi(\sigma)\, U(\Gamma,\boldsymbol{\theta}) \,\phi(\sigma)^\dagger
\qquad
\forall \, \sigma \in S_n.
\]
Thus, for each fixed data point $\Gamma$, the circuit is $S_n$-equivariant.
\end{proof}

%% file: Paper/appendix/B.tex
\section{B. Implementation Details}
Here we summarise implementation-level details required for full reproducibility of the experimental results.
All experiments were conducted using \texttt{Python} 3.10. Our pipeline is built using \texttt{PyTorch} \cite{pytorch}, with \texttt{TorchQuantum} \cite{torchquantum} as a quantum simulation backend. Graph generation is handled using \texttt{Networkx} \cite{networkx}, and converted into \texttt{TorchGeometric} \cite{torchgeometric} for dataset processing and compatibility with the models.

\subsection{Model Implementations}
Parameters are initialised from a uniform distribution in the range $[0, 0.01]$. The parameters for the initial state are instead sampled from the range $[0, 1]$ in order to span a larger region. 
All quantum circuits are evaluated exactly using statevector methods. When sampling is required we use \texttt{PyTorch} \texttt{multinomial} to sample from a given distribution.

\subsection{Training Procedure}
All models are trained using the Adam optimizer, and learning rate set to $0.01$. We vary the maximum epochs according to the model in order to ensure convergence. The number of epochs and learning rate were selected manually via preliminary testing. We note that learning rates in the range $0.01 - 0.001$ typically perform well.
We use a batch size of $100$, and aggregate the loss and selection accuracies across each batch via the mean.
All runs are seeded using the current timestamp (recorded per run). Experiments are run on the CPU, with multi-threading where implemented by \texttt{TorchQuantum}.

\subsection{Dataset Generation}
Datasets are generated once, and reused across all runs. We sample from the Erd\H{o}s--R\'enyi distribution, with specified values of $n$ and $p$ for each dataset.
The parameters and number of instances per paramter setting for each dataset are summarised in \autoref{tab:model_zoo_kc} and \autoref{tab:model_zoo_mc}.
When sampling graphs, we test for isomorphisms using the Weisfeiler Leman isomorphism test at level 2 \cite{leman_reduction_2018}, and select one from each equivalence class. The labels are computed exactly using built-in \texttt{Networkx} functions.

\subsection{Reproducibility}
All experiments were conducted using fixed random seeds and deterministic training pipelines where possible. 
Full source code and configuration files are available \td{shortly.}

%% file: Paper/appendix/Extra_plots.tex
\section{Learning equivariance}\label{subsec:results-mc-equivariance}
While the \MF models achieved lower training and generalisation performance compared to \Rook models, they did typically surpass random guessing baselines. Here we analyse how well the final models managed to learn the inherent symmetry of the task.

For each model, we construct a permuted version of its training dataset. For each original training instance, we add 2 entries whose nodes and edges have been permuted. Then, we quantify the difference in accuracy that the model displays between the original and permuted datapoints. The results are plotted in \autoref{fig:chantilly_acc}.

\begin{figure}[h]
    \centering
    \includegraphics[width=\linewidth]{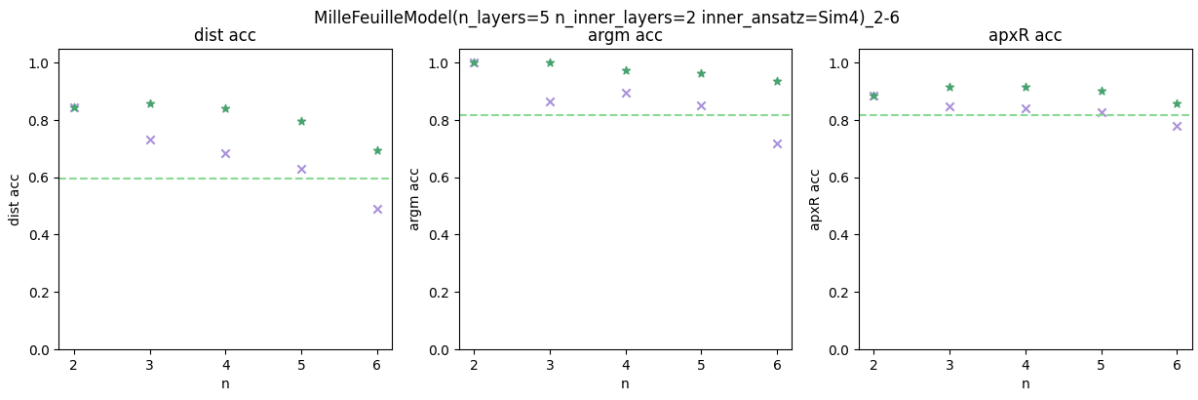}
    \caption{Average \MF model performance drop on a permuted training dataset. Training accuracy is plotted as a star, while the mean permuted accuracy is plotted as a point. Note that for graphs with two nodes, \MF is equivariant as the solutions are symmetric.}
    \label{fig:chantilly_acc}
\end{figure}

Additionally, we inspect directly whether the (anti-)edge hamiltonians at each layer respect the local symmetries pictured in \autoref{fig:gate-symmetry-axioms}. More formally, we expect $H_{u,v} = H_{v,u}$ for undirected graphs as consider here, and that the individual (anti) edges all commute with one another:  $H_{u,v} \cdot H_{u,w} = H_{u,w} \cdot H_{u,v}$. We additionally investigate how close gates are to being the identity. \autoref{fig:gate-symmetry-axioms_plot} visualises the results. We find for both model families that the anti-edge layers are the most non-trivial, and also respect the symmetries the least. Interestingly, the best models do not appear to exhibit better symmetry properties, and we find that the profiles of most models converged to similar points. The \MF models did not converge to parameters that implemented edge unitaries resembling the corresponding trained \Rook edge unitaries.

\begin{figure}[h]
    \centering
    \includegraphics[width=0.3\linewidth]{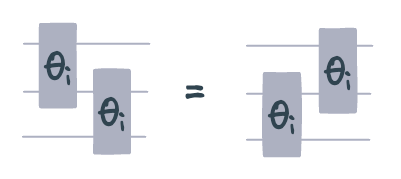}
    \hspace{3em}
    \includegraphics[width=0.3\linewidth]{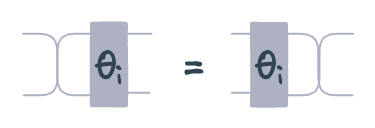}
    \caption{Local edge symmetries. Models that respect these relations will be $S_n$-equivariant and $\mathrm{Aut}(\Gamma)$-invariant.}
    \label{fig:gate-symmetry-axioms}
\end{figure}

\begin{figure}[h]
    \centering
    \includegraphics[width=0.7\linewidth]{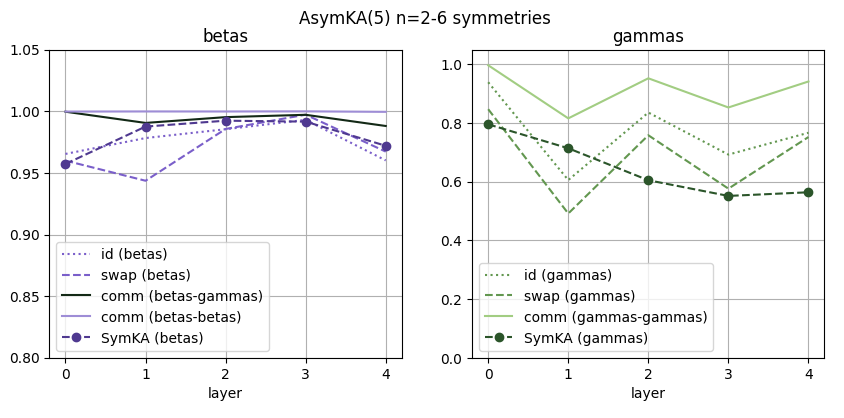}
    \\\vspace{1em}
    \includegraphics[width=0.7\linewidth]{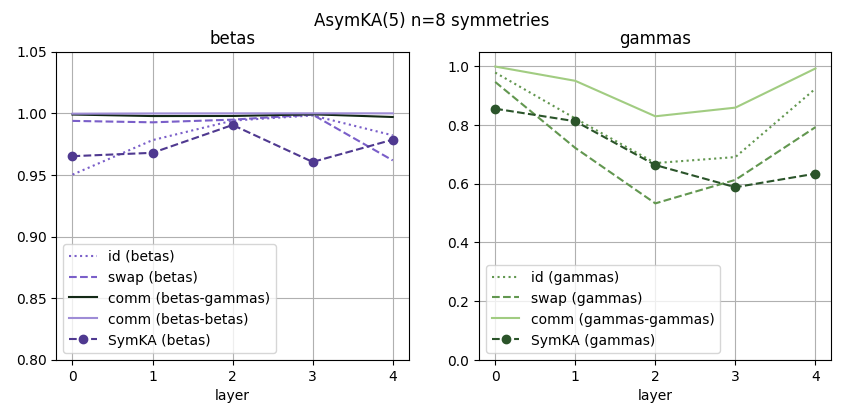}
    \\\vspace{1em}
    \includegraphics[width=\linewidth]{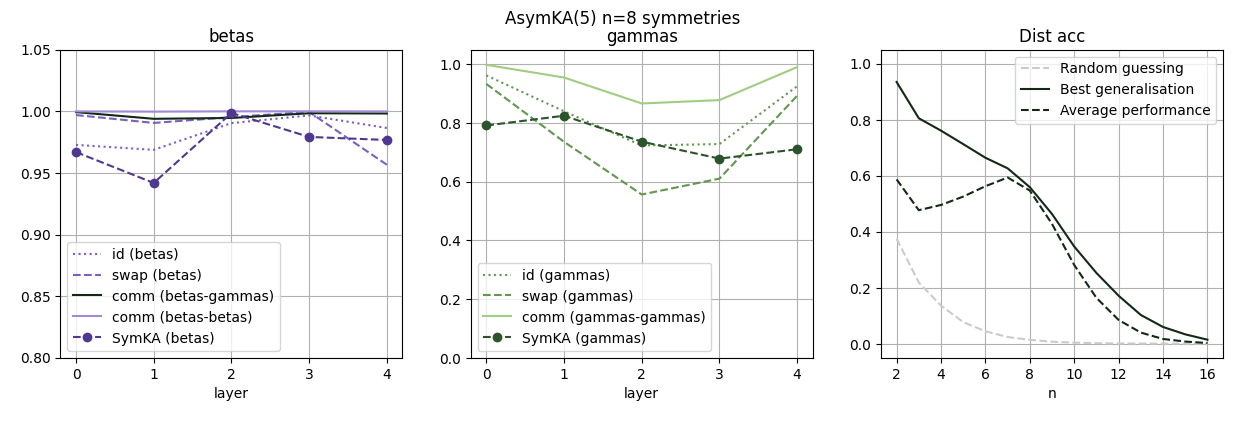}
    
    \caption{Average axiom satisfaction for \MF models. Betas and Gammas correspond to positive and negative edges respectively. In the lower plot, we show one model that performed above average (the dashed line), and its associated gate symmetries.}
    \label{fig:gate-symmetry-axioms_plot}
\end{figure}

\section{Detailed results}
In this section, we display more detailed results for the runs, visualising the individual model performance for each split. We observe that models tend to converge to similar (local) minima, though with different probabilities depending on the choice of dataset and layers.
\begin{figure}[h]
    \centering
    \includegraphics[width=\linewidth]{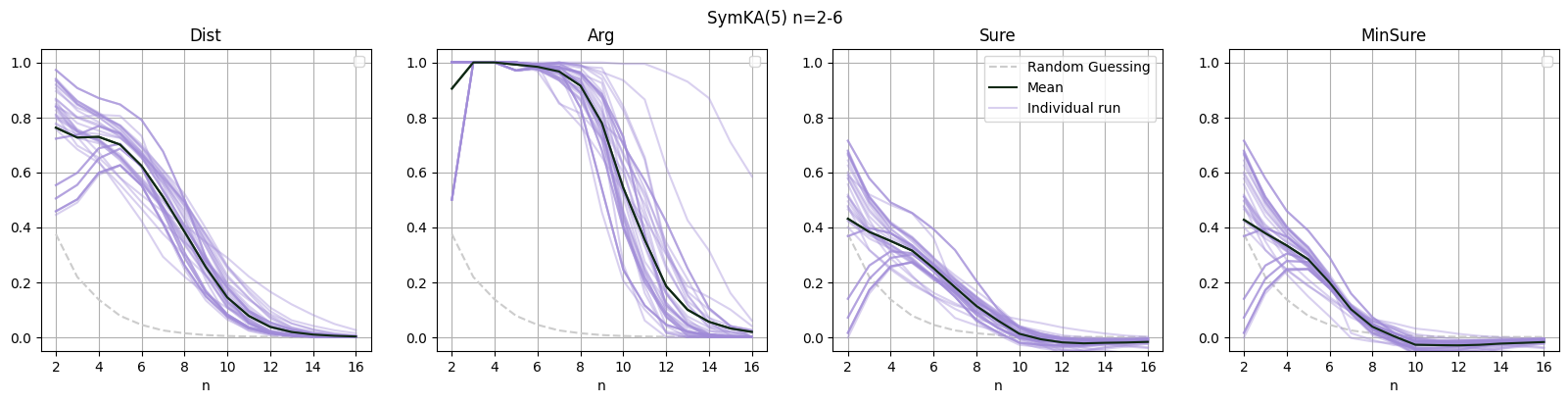}
    \\\vspace{1em}
    \includegraphics[width=\linewidth]{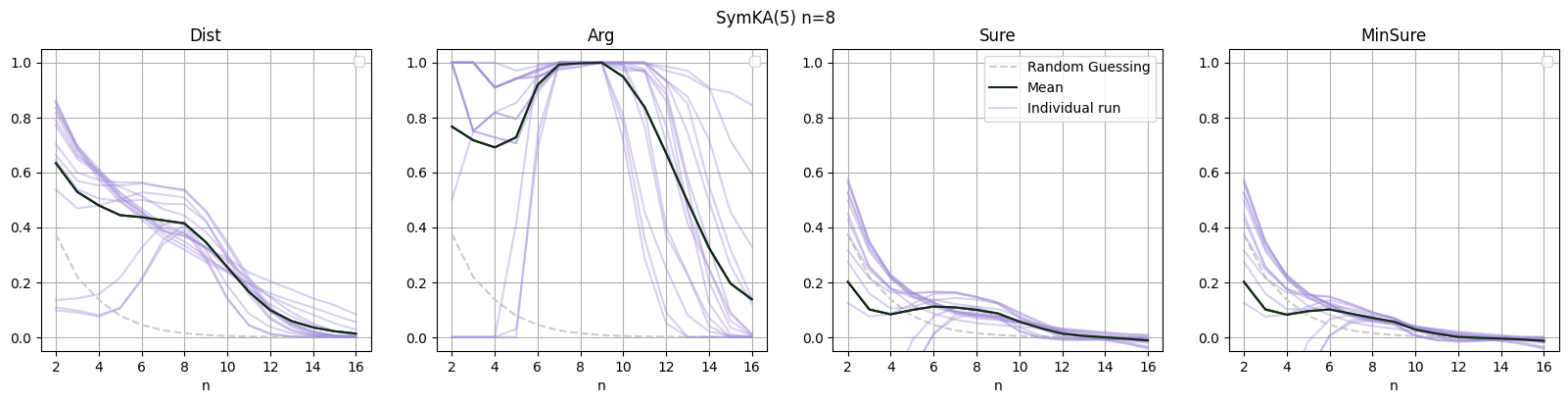}
    \\\vspace{1em}
    \includegraphics[width=\linewidth]{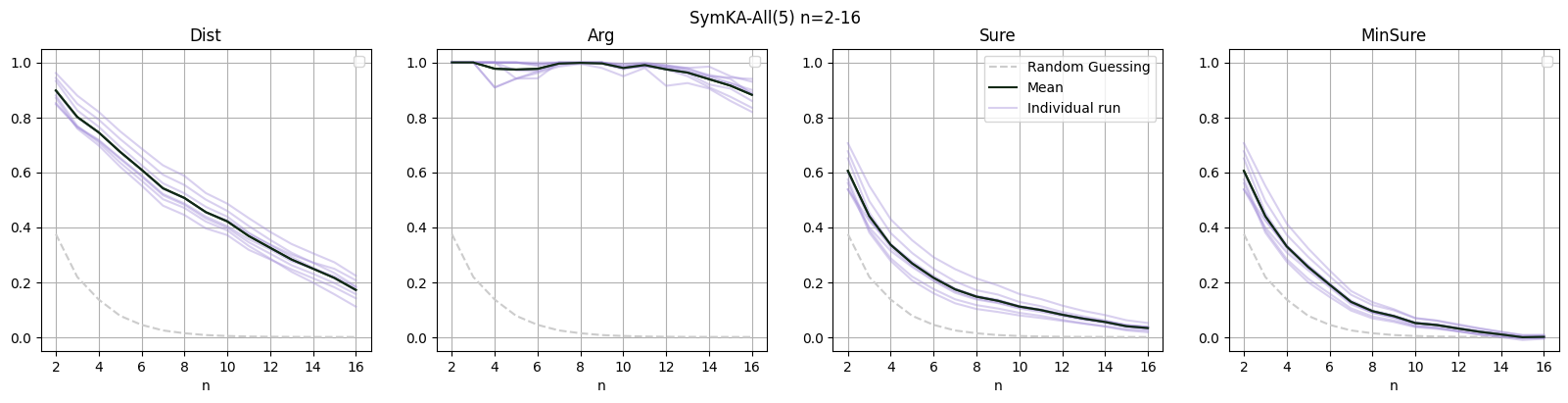}
    \\\vspace{1em}
    \includegraphics[width=\linewidth]{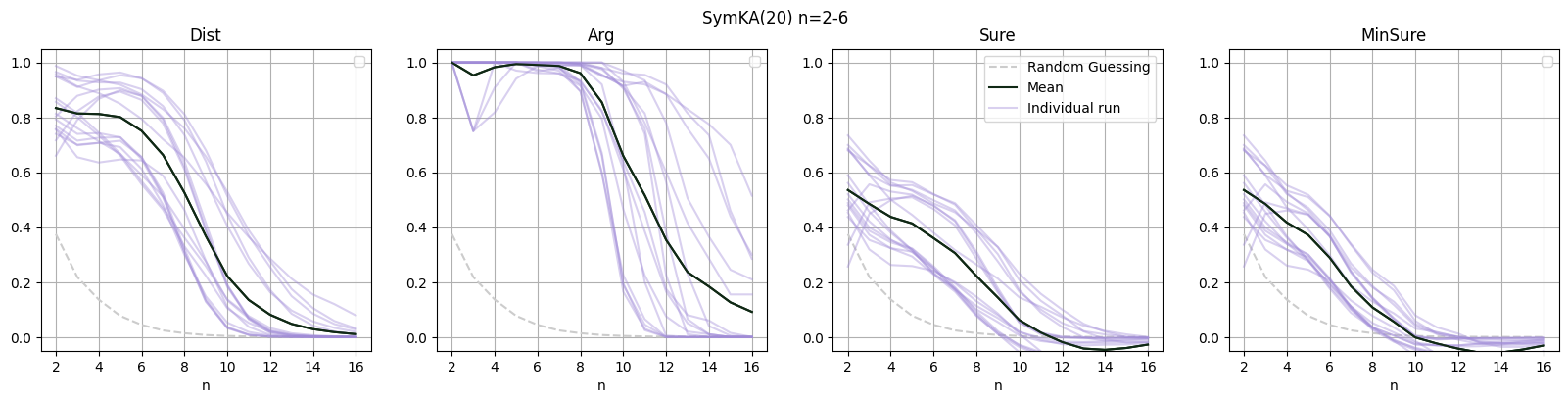}
    \\\vspace{1em}
    \includegraphics[width=\linewidth]{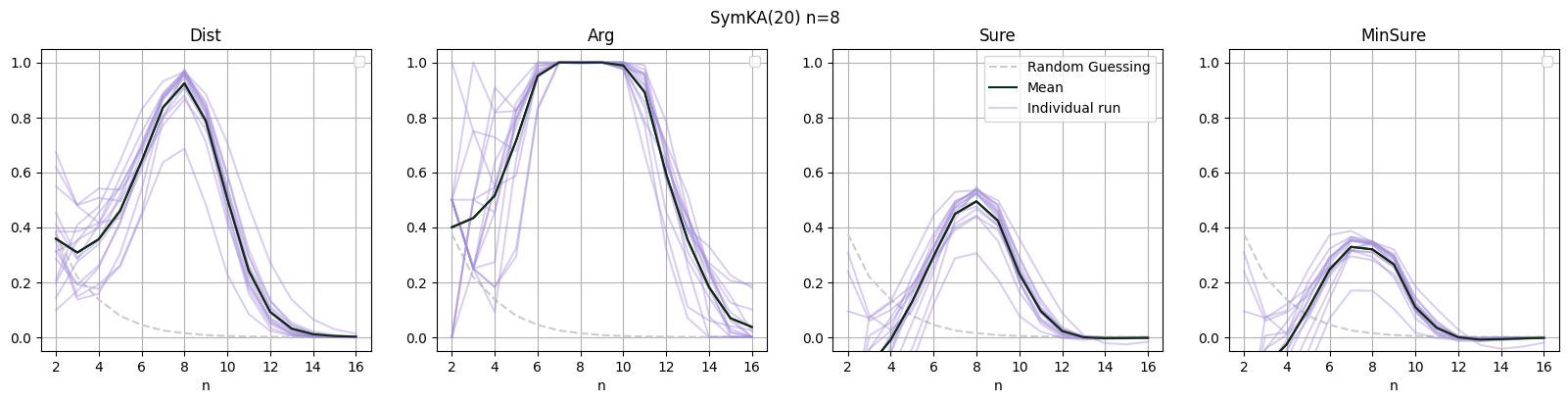}
    \caption{Individual \Rook model results. Each line is a single 5-fold cross-validation iteration for one of the splits. We target at least two iterations per split. The dashed black line shows the mean model performance over the splits.}
    \label{fig:per_model_results:rook}
\end{figure}

\begin{figure}[h]
    \centering
    \includegraphics[width=\linewidth]{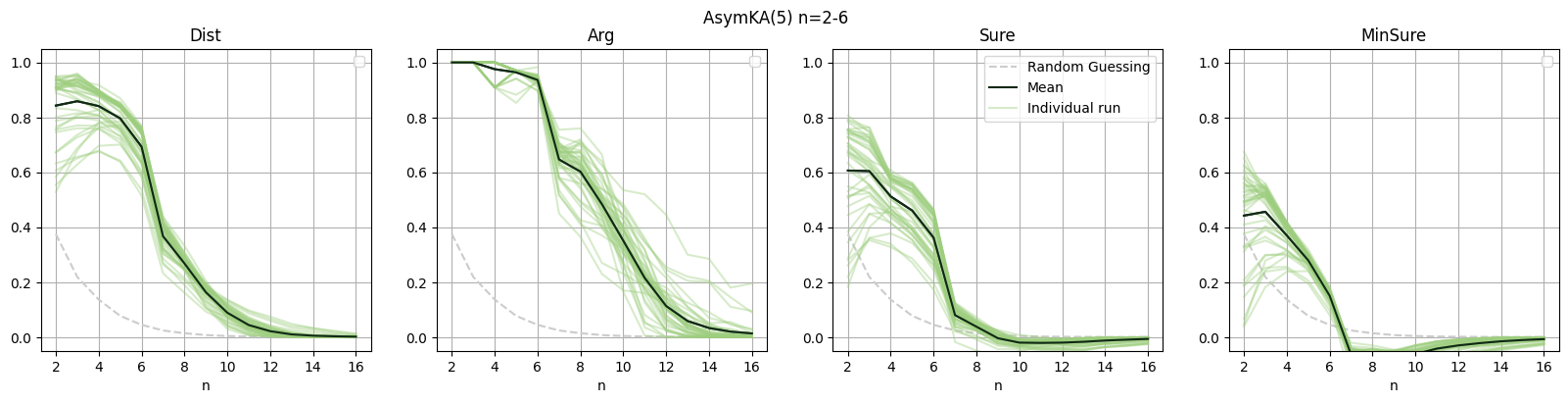}
    \\\vspace{1em}
    \includegraphics[width=\linewidth]{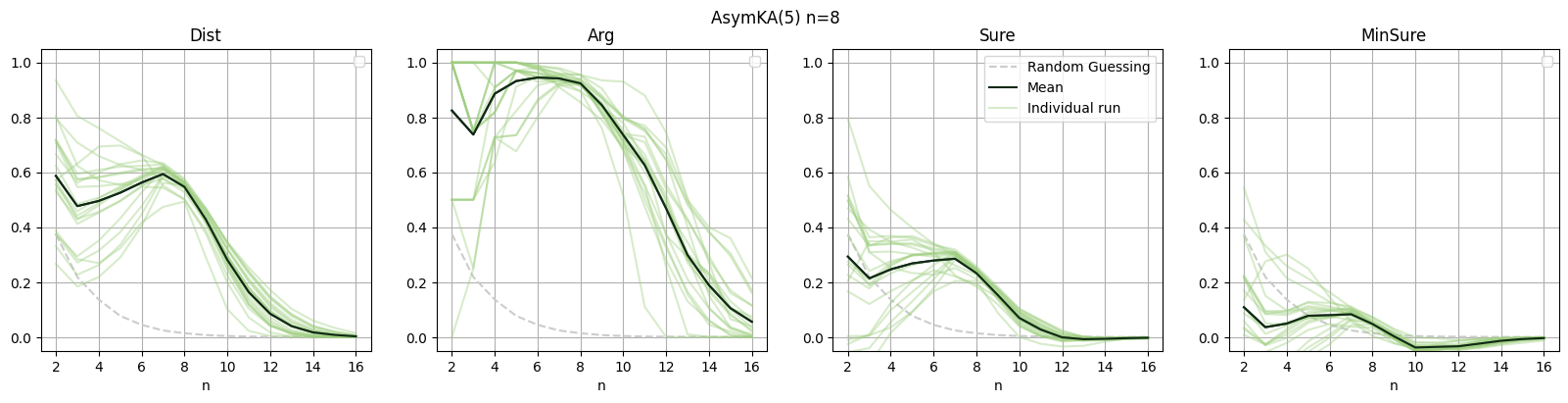}
    \\\vspace{1em}
    \includegraphics[width=\linewidth]{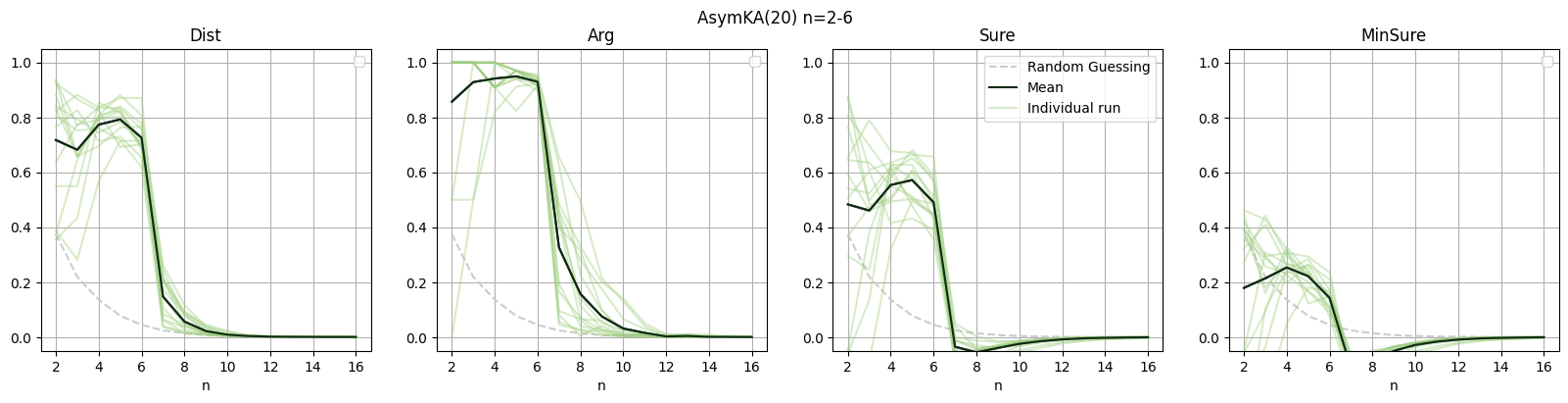}
    \\\vspace{1em}
    \includegraphics[width=\linewidth]{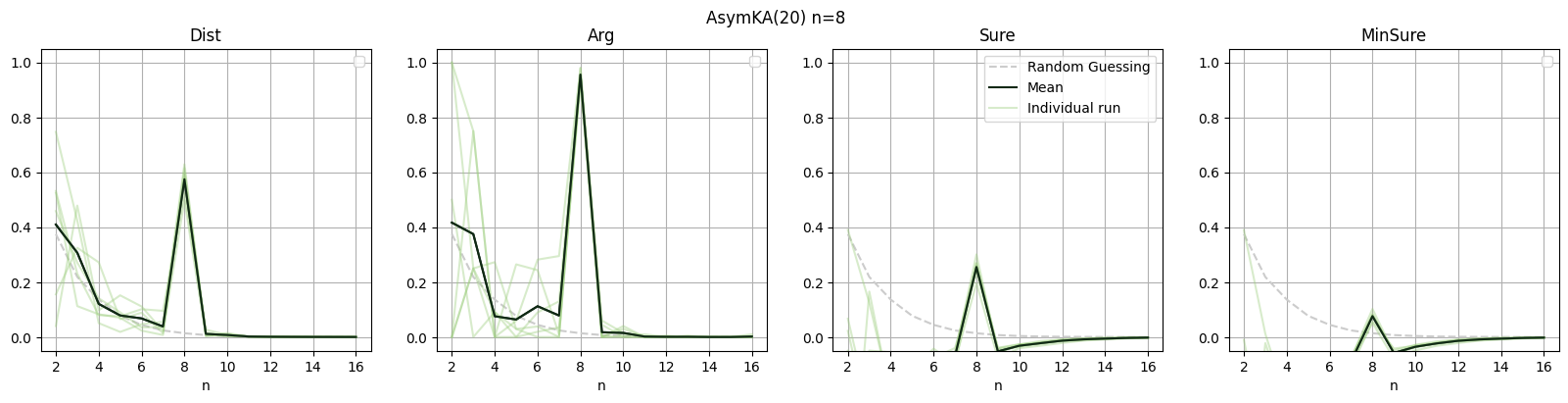}
    \caption{Individual \MF model results. Each line is a single 5-fold cross-validation iteration for one of the splits. We target at least two iterations per split. The dashed black line shows the mean model performance over the splits.}
    \label{fig:per_model_results:mf}
\end{figure}

\begin{figure}[h]
    \centering
    \includegraphics[width=0.8\linewidth]{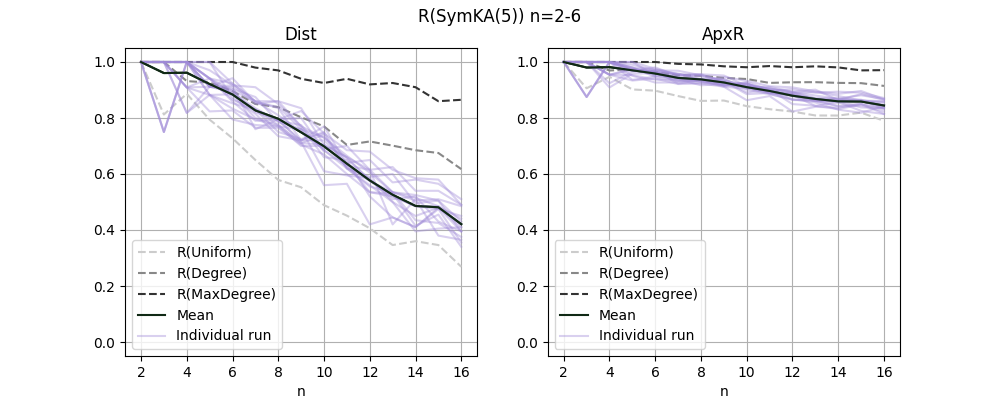}
    \\\vspace{1em}
    \includegraphics[width=0.8\linewidth]{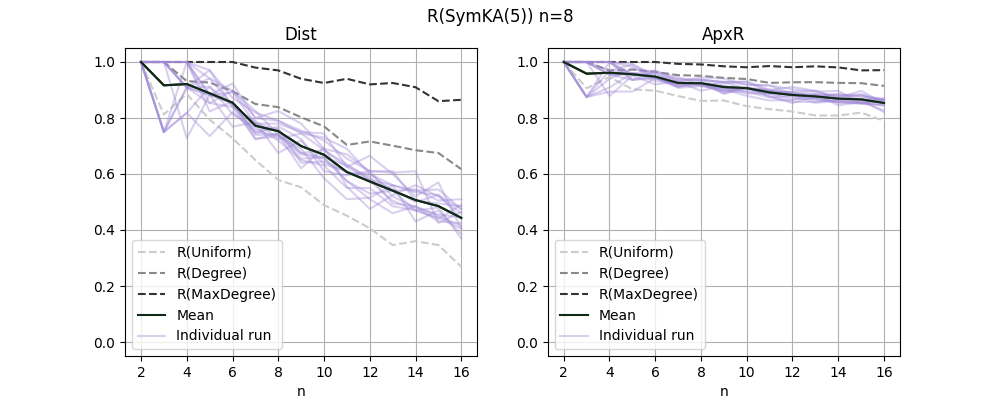}
    \\\vspace{1em}
    \includegraphics[width=0.8\linewidth]{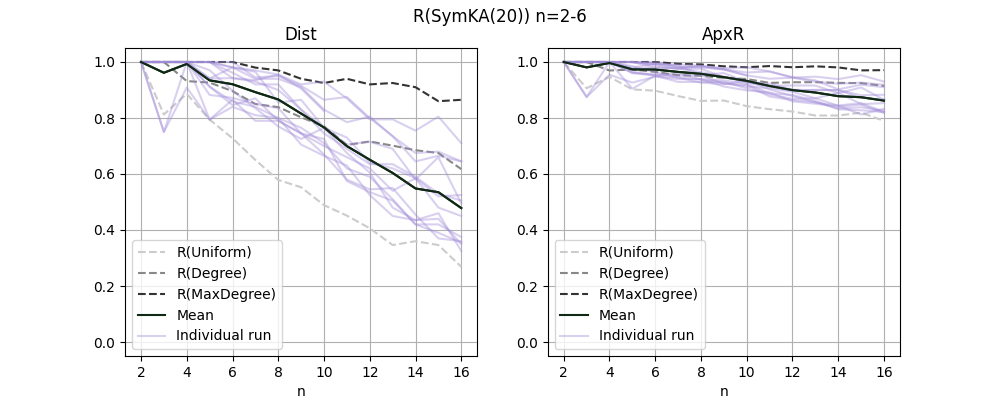}
    \\\vspace{1em}
    \includegraphics[width=0.8\linewidth]{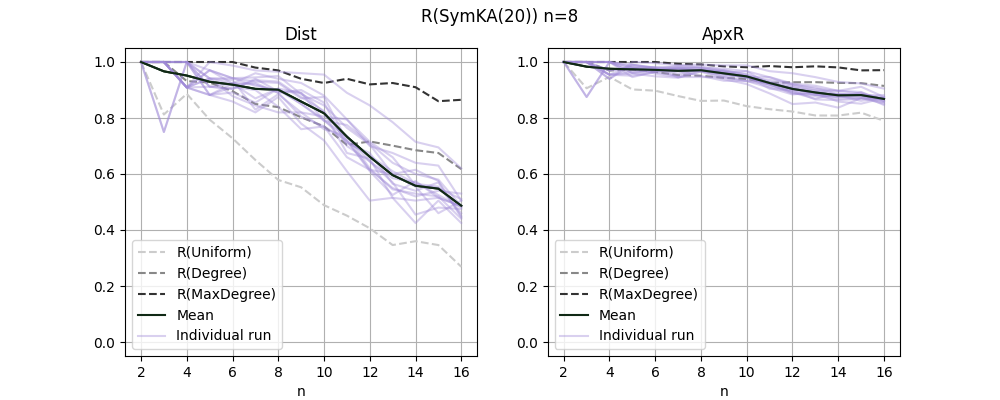}

    \caption{Individual \Pine results for \Rook models. Each line is a single 5-fold cross-validation iteration for one of the splits. We target at least two iterations per split. The dashed black line shows the mean model performance over the splits.}
    \label{fig:per_model_results:pine}
\end{figure}
\begin{figure}[h]
    \centering
    \includegraphics[width=0.8\linewidth]{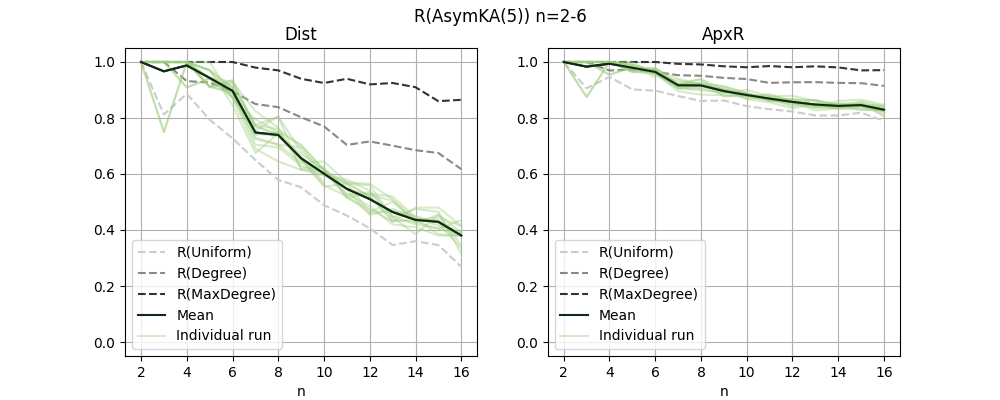}
    \\\vspace{1em}
    \includegraphics[width=0.8\linewidth]{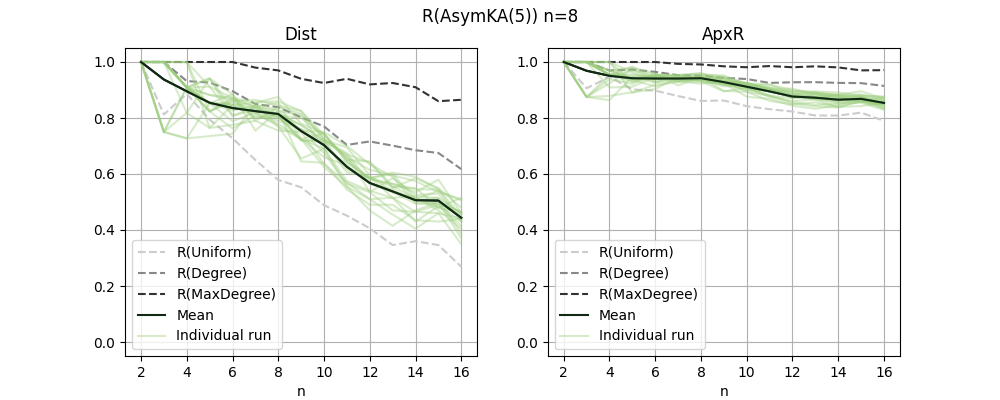}
    
    \caption{Individual \Pine results for \MF models. Each line is a single 5-fold cross-validation iteration for one of the splits. We target at least two iterations per split. The dashed black line shows the mean model performance over the splits.}
    \label{fig:per_model_results:pine}
\end{figure}

%% file: refs.bib
@misc{QAOA,
	title        = {A Quantum Approximate Optimization Algorithm},
	author       = {Edward Farhi and Jeffrey Goldstone and Sam Gutmann},
	year         = 2014,
	url          = {https://arxiv.org/abs/1411.4028},
	eprint       = {1411.4028},
	archiveprefix = {arXiv},
	primaryclass = {quant-ph}
}

@article{yuan2019theory,
	title        = {Theory of variational quantum simulation},
	author       = {Yuan, Xiao and Endo, Suguru and Zhao, Qi and Li, Ying and Benjamin, Simon C.},
	year         = 2019,
	journal      = {Quantum},
	publisher    = {Verein zur F\"{o}rderung des Open Access Publizierens in den Quantenwissenschaften},
	volume       = 3,
	pages        = 191
}

@article{peruzzo2014variational,
	title        = {A variational eigenvalue solver on a photonic quantum processor},
	author       = {Peruzzo, Alberto and McClean, Jarrod and Shadbolt, Peter and Yung, Man-Hong and Zhou, Xiao-Qi and Love, Peter J. and Aspuru-Guzik, Al\'{a}n and O'Brien, Jeremy L.},
	year         = 2014,
	journal      = {Nature Communications},
	publisher    = {Nature Publishing Group},
	volume       = 5,
	number       = 1,
	pages        = 4213
}

@article{dunjko,
	title        = {Quantum machine learning advantages beyond hardness of evaluation},
	author       = {Molteni, Riccardo and Marshall, Simon C. and Dunjko, Vedran},
	year         = 2025,
	journal      = {arXiv preprint arXiv:2504.15964},
	archiveprefix = {arXiv},
	eprint       = {2504.15964}
}

@article{holmes2022connecting,
	title        = {Connecting ansatz expressibility to gradient magnitudes and barren plateaus},
	author       = {Holmes, Zoe and Sharma, Kunal and Cerezo, M. and Coles, Patrick J.},
	year         = 2022,
	journal      = {PRX Quantum},
	publisher    = {Aps},
	volume       = 3,
	number       = 1,
	pages        = {010313}
}

@article{larocca2023theory,
	title        = {Theory of overparameterization in quantum neural networks},
	author       = {Larocca, Martin and Kumar, Nishant and Cerezo, M. and Coles, Patrick J.},
	year         = 2023,
	journal      = {Nature Computational Science},
	publisher    = {Nature Publishing Group},
	volume       = 3,
	pages        = {542--552}
}

@article{marrero2021entanglement,
	title        = {Entanglement-induced barren plateaus},
	author       = {Marrero, Carlos O. and Kieferov{\'a}, M{\'a}ria and Wiebe, Nathan},
	year         = 2021,
	journal      = {PRX Quantum},
	publisher    = {Aps},
	volume       = 2,
	number       = 4,
	pages        = {040316}
}

@inproceedings{takahashi2020classically,
	title        = {Classically Simulating Quantum Circuits with Local Depolarizing Noise},
	author       = {Yasuhiro Takahashi and Yuki Takeuchi and Seiichiro Tani},
	year         = 2020,
	booktitle    = {45th International Symposium on Mathematical Foundations of Computer Science (MFCS 2020)},
	pages        = {83:1--83:13},
	note         = {Also arXiv:2001.08373}
}

@article{jozsa2008matchgates,
	title        = {Matchgates and classical simulation of quantum circuits},
	author       = {Richard Jozsa and Akimasa Miyake},
	year         = 2008,
	journal      = {Proceedings of the Royal Society A},
	volume       = 464,
	number       = 2100,
	pages        = {3089--3106}
}

@article{goh2025lie,
	title        = {Lie-algebraic classical simulations for quantum computing},
	author       = {M. L. Goh and others},
	year         = 2025,
	journal      = {Under review / arXiv},
	note         = {Shows that when the dimension of the algebra grows only polynomially in system size, efficient classical simulation becomes possible}
}

@article{sim_expressibility_2019,
	title        = {Expressibility and entangling capability of parameterized quantum circuits for hybrid quantum-classical algorithms},
	author       = {Sim, Sukin and Johnson, Peter D. and Aspuru-Guzik, Alan},
	year         = 2019,
	month        = dec,
	journal      = {Advanced Quantum Technologies},
	volume       = 2,
	number       = 12,
	pages        = 1900070,
	doi          = {10.1002/qute.201900070},
	issn         = {2511-9044, 2511-9044},
	url          = {http://arxiv.org/abs/1905.10876},
	urldate      = {2023-10-11},
	note         = {arXiv:1905.10876 [quant-ph]}
}

@article{McClean2018BP,
	title        = {Barren plateaus in quantum neural network training landscapes},
	author       = {McClean, Jarrod R. and Boixo, Sergio and Smelyanskiy, Vadim N. and Babbush, Ryan and Neven, Hartmut},
	year         = 2018,
	journal      = {Nature Communications},
	volume       = 9,
	pages        = 4812,
	doi          = {10.1038/s41467-018-07090-4}
}

@article{MBS+18,
	title        = {Barren plateaus in quantum neural network training landscapes},
	author       = {McClean, Jarrod R. and Boixo, Sergio and Smelyanskiy, Vadim N. and Babbush, Ryan and Neven, Hartmut},
	year         = 2018,
	journal      = {Nature Communications},
	volume       = 9,
	number       = 1,
	pages        = 4812,
	doi          = {10.1038/s41467-018-07090-4}
}

@article{Fontana2023,
	title        = {Lie algebraic structure of parameterized quantum circuits and barren plateaus},
	author       = {Fontana, Enrico and Cerezo, M. and Holmes, Zoe and Coles, Patrick J.},
	year         = 2023,
	journal      = {arXiv preprint arXiv:2309.07902}
}

@article{Schatzki2022,
	title        = {Theoretical guarantees for permutation-equivariant quantum neural networks},
	author       = {Schatzki, Louis and Larocca, Martin and Cerezo, M. and Coles, Patrick J.},
	year         = 2022,
	journal      = {arXiv preprint arXiv:2210.09974}
}

@article{QIROA,
	title        = {Quantum-Informed Recursive Optimization Algorithms},
	author       = {Fin\v{z}gar, Jernej Rudi and Kerschbaumer, Aron and Schuetz, Martin J.A. and Mendl, Christian B. and Katzgraber, Helmut G.},
	year         = 2024,
	month        = may,
	journal      = {PRX Quantum},
	publisher    = {American Physical Society (APS)},
	volume       = 5,
	number       = 2,
	doi          = {10.1103/prxquantum.5.020327},
	issn         = {2691-3399},
	url          = {http://dx.doi.org/10.1103/PRXQuantum.5.020327}
}

@article{pullan2006dynamic,
	title        = {Dynamic Local Search for the Maximum Clique Problem},
	author       = {Pullan, Wayne and Hoos, Holger H.},
	year         = 2006,
	journal      = {Journal of Artificial Intelligence Research},
	volume       = 25,
	pages        = {159--185},
	doi          = {10.1613/jair.1815}
}

@article{GENG20075064,
	title        = {A simple simulated annealing algorithm for the maximum clique problem},
	author       = {Xiutang Geng and Jin Xu and Jianhua Xiao and Linqiang Pan},
	year         = 2007,
	journal      = {Information Sciences},
	volume       = 177,
	number       = 22,
	pages        = {5064--5071},
	doi          = {https://doi.org/10.1016/j.ins.2007.06.009},
	issn         = {0020-0255},
	url          = {https://www.sciencedirect.com/science/article/pii/S0020025507002988}
}

@article{evolution,
	title        = {{An Evolutionary Algorithm With Guided Mutation for the Maximum Clique Problem}},
	author       = {{Zhang}, Qingfu and {Sun}, Jianyong and {Tsang}, Edward},
	year         = 2005,
	month        = jan,
	journal      = {IEEE Transactions on Evolutionary Computation},
	volume       = 9,
	number       = 2,
	pages        = {192--200},
	doi          = {10.1109/tevc.2004.840835},
	adsurl       = {https://ui.adsabs.harvard.edu/abs/2005ITEC....9..192Z}
}

@article{adiabatic,
	title        = {Finding cliques by quantum adiabatic evolution},
	author       = {Childs, Andrew M. and Farhi, Edward and Goldstone, Jeffrey and Gutmann, Sam},
	year         = 2002,
	month        = apr,
	journal      = {Quantum Info. Comput.},
	publisher    = {Rinton Press, Incorporated},
	address      = {Paramus, NJ},
	volume       = 2,
	number       = 3,
	pages        = {181–191},
	issn         = {1533-7146},
	issue_date   = {April 2002},
	numpages     = 11
}

@article{dzibuyna,
	title        = {Limitations of tensor-network approaches for optimization and sampling: A comparison to quantum and classical Ising machines},
	author       = {Dziubyna, Anna Maria and Smierzchalski, Tomasz and Gardas, Bartlomiej and Rams, Marek M. and Mohseni, Masoud},
	year         = 2025,
	month        = may,
	journal      = {Physical Review Applied},
	publisher    = {American Physical Society (APS)},
	volume       = 23,
	number       = 5,
	doi          = {10.1103/physrevapplied.23.054049},
	issn         = {2331-7019},
	url          = {http://dx.doi.org/10.1103/PhysRevApplied.23.054049}
}

@article{finzgar_quantum-informed_2024,
	title        = {Quantum-{Informed} {Recursive} {Optimization} {Algorithms}},
	author       = {Fin\v{z}gar, Jernej Rudi},
	year         = 2024,
	journal      = {PRX Quantum},
	volume       = 5,
	number       = 2,
	doi          = {10.1103/PRXQuantum.5.020327}
}

@misc{farhi_quantum_2020,
	title        = {The {Quantum} {Approximate} {Optimization} {Algorithm} {Needs} to {See} the {Whole} {Graph}: {A} {Typical} {Case}},
	shorttitle   = {The {Quantum} {Approximate} {Optimization} {Algorithm} {Needs} to {See} the {Whole} {Graph}},
	author       = {Farhi, Edward and Gamarnik, David and Gutmann, Sam},
	year         = 2020,
	month        = apr,
	publisher    = {arXiv},
	doi          = {10.48550/arXiv.2004.09002},
	url          = {http://arxiv.org/abs/2004.09002},
	urldate      = {2026-03-12},
	note         = {arXiv:2004.09002 [quant-ph]}
}

@inproceedings{esposito_hybrid_2024,
	title        = {Hybrid {Classical}-{Quantum} {Simulation} of {MaxCut} using {QAOA}-in-{QAOA}},
	author       = {Esposito, Aniello and Danzig, Tamuz},
	year         = 2024,
	month        = may,
	booktitle    = {2024 {IEEE} {International} {Parallel} and {Distributed} {Processing} {Symposium} {Workshops} ({IPDPSW})},
	pages        = {1088--1094},
	doi          = {10.1109/ipdpsw63119.2024.00180},
	url          = {http://arxiv.org/abs/2406.17383},
	urldate      = {2026-03-12},
	note         = {arXiv:2406.17383 [quant-ph]},
	language     = {en}
}

@article{bravyi_obstacles_2020,
	title        = {Obstacles to {State} {Preparation} and {Variational} {Optimization} from {Symmetry} {Protection}},
	author       = {Bravyi, Sergey and Kliesch, Alexander and Koenig, Robert and Tang, Eugene},
	year         = 2020,
	month        = dec,
	journal      = {Physical Review Letters},
	volume       = 125,
	number       = 26,
	pages        = 260505,
	doi          = {10.1103/PhysRevLett.125.260505},
	issn         = {0031-9007, 1079-7114},
	url          = {http://arxiv.org/abs/1910.08980},
	urldate      = {2026-01-19},
	note         = {arXiv:1910.08980 [quant-ph]}
}

@article{brady_iterative_2024,
	title        = {Iterative quantum algorithms for maximum independent set},
	author       = {Brady, Lucas T. and Hadfield, Stuart},
	year         = 2024,
	month        = nov,
	journal      = {Physical Review A},
	volume       = 110,
	number       = 5,
	pages        = {052435},
	doi          = {10.1103/PhysRevA.110.052435},
	issn         = {2469-9926, 2469-9934},
	url          = {https://link.aps.org/doi/10.1103/PhysRevA.110.052435},
	urldate      = {2026-02-23},
	language     = {en}
}

@misc{brady_quantum_2025,
	title        = {Quantum {DPLL} and {Generalized} {Constraints} in {Iterative} {Quantum} {Algorithms}},
	author       = {Brady, Lucas T. and Hadfield, Stuart},
	year         = 2025,
	month        = sep,
	publisher    = {arXiv},
	doi          = {10.48550/arXiv.2509.02689},
	url          = {http://arxiv.org/abs/2509.02689},
	urldate      = {2026-03-13},
	note         = {arXiv:2509.02689 [quant-ph]}
}

@misc{wurtz_fixed_2021,
	title        = {The fixed angle conjecture for {QAOA} on regular {MaxCut} graphs},
	author       = {Wurtz, Jonathan and Lykov, Danylo},
	year         = 2021,
	month        = jul,
	publisher    = {arXiv},
	doi          = {10.48550/arXiv.2107.00677},
	url          = {http://arxiv.org/abs/2107.00677},
	urldate      = {2026-03-12},
	note         = {arXiv:2107.00677 [quant-ph]}
}

@inproceedings{networkx,
	title        = {Exploring {Network} {Structure}, {Dynamics}, and {Function} using {NetworkX}},
	author       = {Hagberg, Aric A. and Schult, Daniel A. and Swart, Pieter J.},
	year         = 2008,
	month        = jun,
	address      = {Pasadena, California},
	pages        = {11--15},
	doi          = {10.25080/tcwv9851},
	url          = {https://doi.curvenote.com/10.25080/TCWV9851},
	urldate      = {2026-03-13},
	copyright    = {https://creativecommons.org/licenses/by/3.0/},
	language     = {en}
}

@misc{pytorch,
	title        = {{PyTorch}: {An} {Imperative} {Style}, {High}-{Performance} {Deep} {Learning} {Library}},
	shorttitle   = {{PyTorch}},
	author       = {Paszke, Adam and Gross, Sam and Massa, Francisco and Lerer, Adam and Bradbury, James and Chanan, Gregory and Killeen, Trevor and Lin, Zeming and Gimelshein, Natalia and Antiga, Luca and Desmaison, Alban and K\"{o}pf, Andreas and Yang, Edward and DeVito, Zach and Raison, Martin and Tejani, Alykhan and Chilamkurthy, Sasank and Steiner, Benoit and Fang, Lu and Bai, Junjie and Chintala, Soumith},
	year         = 2019,
	month        = dec,
	publisher    = {arXiv},
	doi          = {10.48550/arXiv.1912.01703},
	url          = {http://arxiv.org/abs/1912.01703},
	urldate      = {2026-03-13},
	note         = {arXiv:1912.01703 [cs]}
}

@misc{torchgeometric,
	title        = {Fast {Graph} {Representation} {Learning} with {PyTorch} {Geometric}},
	author       = {Fey, Matthias and Lenssen, Jan Eric},
	year         = 2019,
	month        = apr,
	publisher    = {arXiv},
	doi          = {10.48550/arXiv.1903.02428},
	url          = {http://arxiv.org/abs/1903.02428},
	urldate      = {2026-03-13},
	note         = {arXiv:1903.02428 [cs]}
}

@inproceedings{torchquantum,
	title        = {{TorchQuantum} {Case} {Study} for {Robust} {Quantum} {Circuits} ({Invited} {Paper})},
	author       = {Wang, Hanrui and Liang, Zhiding and Gu, Jiaqi and Li, Zirui and Ding, Yongshan and Jiang, Weiwen and Shi, Yiyu and Pan, David Z. and Chong, Frederic T. and Han, Song},
	year         = 2022,
	month        = oct,
	booktitle    = {2022 {IEEE}/{ACM} {International} {Conference} {On} {Computer} {Aided} {Design} ({ICCAD})},
	pages        = {1--9},
	url          = {https://ieeexplore.ieee.org/document/10069802},
	urldate      = {2026-03-13},
	note         = {Issn: 1558-2434}
}

@inproceedings{leman_reduction_2018,
	title        = {{The} {Reduction} {Of} {A} {Graph} {To} {Canonical} {Form} {And} {The} {Algebra} {Which} {Appears} {Therein}},
	author       = {Leman, A.},
	year         = 2018,
	url          = {https://www.semanticscholar.org/paper/THE-REDUCTION-OF-A-GRAPH-TO-CANONICAL-FORM-AND-THE-Leman/6d1d91a413af1212fea8791e266282019b62c37d},
	urldate      = {2026-03-13}
}

@article{wybo_missing_2025,
	title        = {Missing {Puzzle} {Pieces} in the {Performance} {Landscape} of the {Quantum} {Approximate} {Optimization} {Algorithm}},
	author       = {Wybo, Elisabeth and Leib, Martin},
	year         = 2025,
	month        = oct,
	journal      = {Quantum},
	volume       = 9,
	pages        = 1892,
	doi          = {10.22331/q-2025-10-22-1892},
	url          = {https://quantum-journal.org/papers/q-2025-10-22-1892/},
	urldate      = {2026-03-25},
	note         = {Publisher: Verein zur F\"{o}rderung des Open Access Publizierens in den Quantenwissenschaften},
	language     = {en-GB}
}

@misc{duneau_scalable_2024,
	title = {Scalable and interpretable quantum natural language processing: an implementation on trapped ions},
	shorttitle = {Scalable and interpretable quantum natural language processing},
	url = {http://arxiv.org/abs/2409.08777},
	doi = {10.48550/arXiv.2409.08777},
	urldate = {2025-06-04},
	publisher = {arXiv},
	author = {Duneau, Tiffany and Bruhn, Saskia and Matos, Gabriel and Laakkonen, Tuomas and Saiti, Katerina and Pearson, Anna and Meichanetzidis, Konstantinos and Coecke, Bob},
	month = sep,
	year = {2024},
	note = {arXiv:2409.08777 [quant-ph]},
}

@misc{recio-armengol_train_2025,
	title = {Train on classical, deploy on quantum: scaling generative quantum machine learning to a thousand qubits},
	shorttitle = {Train on classical, deploy on quantum},
	url = {http://arxiv.org/abs/2503.02934},
	doi = {10.48550/arXiv.2503.02934},
	urldate = {2026-04-20},
	publisher = {arXiv},
	author = {Recio-Armengol, Erik and Ahmed, Shahnawaz and Bowles, Joseph},
	month = mar,
	year = {2025},
	note = {arXiv:2503.02934 [quant-ph]
version: 1},
}

@article{bremner_classical_2011,
	title = {Classical simulation of commuting quantum computations implies collapse of the polynomial hierarchy},
	volume = {467},
	issn = {1364-5021, 1471-2946},
	url = {http://arxiv.org/abs/1005.1407},
	doi = {10.1098/rspa.2010.0301},
	number = {2126},
	urldate = {2025-08-04},
	journal = {Proceedings of the Royal Society A: Mathematical, Physical and Engineering Sciences},
	author = {Bremner, Michael J. and Jozsa, Richard and Shepherd, Dan J.},
	month = feb,
	year = {2011},
	note = {arXiv:1005.1407 [quant-ph]},
	pages = {459--472},
}

@misc{placidi_impact_2026,
	title = {The {Impact} of {Qubit} {Connectivity} on {Quantum} {Advantage} in {Noisy} {IQP} {Circuits}},
	url = {http://arxiv.org/abs/2604.12635},
	doi = {10.48550/arXiv.2604.12635},
	urldate = {2026-04-20},
	publisher = {arXiv},
	author = {Placidi, Leonardo and Rinaldi, Enrico and Fujii, Keisuke and Liu, Chen-Yu},
	month = apr,
	year = {2026},
	note = {arXiv:2604.12635 [quant-ph]},
}
